\documentclass[sigconf, screen=true,nonacm=true]{acmart}\fancyhead{}
\usepackage[english]{babel}
\usepackage[utf8x]{inputenc}

\usepackage{amsmath,amssymb,amsfonts,mathrsfs}
\usepackage{tkz-graph}
\usepackage{tikz}
\usepackage{mathtools}
\usepackage{pifont}
\usepackage{cleveref}
\usepackage{booktabs}
\usetikzlibrary{arrows,patterns, positioning}

\usepackage{enumitem}
\usepackage{xcolor}
\usepackage{balance}

\usepackage[outercaption]{sidecap}

\settopmatter{printccs=false,printacmref=false}

\newsavebox\MBox

\def\mathunderline#1#2{\color{#1}\underline{{\color{black}#2}}\color{black}}

\usetikzlibrary{shapes,calc,arrows}
\usepackage[font={bf,sf,small}]{caption}
\definecolor{turquoise}{rgb}{0.1,0.6,0.3}
\definecolor{darkturquoise}{rgb}{0.2,0.5,0.4}
\definecolor{lightturquoise}{rgb}{0.1,0.8,0.2}
\definecolor{intturquoise}{rgb}{0.1,0.7,0.25}

\definecolor{hedge_1}{rgb}{0.9, 0.05, 0.05}
\definecolor{hedge_2}{rgb}{0.1, 0.7, 0.3}
\definecolor{hedge_3}{rgb}{0.1, 0.1, 0.9}
\definecolor{hedge_4}{rgb}{1, 0.6, 0.05}
\definecolor{hedge_5}{rgb}{0.8, 0.1, 0.8}

\definecolor{bough_1}{rgb}{0.9, 0.1, 0.1}
\definecolor{bough_2}{rgb}{1.0, 0.6, 0}
\definecolor{bough_3}{rgb}{0.3, 0.4, 0.95}
\definecolor{bough_4}{rgb}{1, 0.6, 0.05}
\definecolor{bough_5}{rgb}{0.8, 0.1, 0.8}
\definecolor{bough_6}{rgb}{1, 1, 0.05}

\colorlet{pastel_1}{bough_1!70}
\colorlet{pastel_2}{bough_2!80}
\colorlet{pastel_3}{bough_3!80}
\colorlet{pastel_4}{bough_4!80}
\colorlet{pastel_5}{bough_5!50}
\colorlet{pastel_6}{bough_6!40}

\colorlet{bwsafe_1}{pastel_1}
\colorlet{bwsafe_2}{pastel_2!90}
\colorlet{bwsafe_3}{pastel_3!90!black}
\colorlet{bwsafe_4}{pastel_5!90!black}

\colorlet{neutral-highlight}{black!10}

\definecolor{mygray}{rgb}{0.8,0.8,0.8}

\definecolor{highlight-1}{rgb}{0.8,0.15,0.15}

\usetikzlibrary{shapes}
\usetikzlibrary{plotmarks}
 \usetikzlibrary{calc,fit}

\newtheorem{theorem}{Theorem}
\newtheorem{observation}[theorem]{Observation}

\setlength{\footskip}{18mm}

\begin{document}
\fancyhead{}
\fancyfoot[C]{\thepage}
\setcopyright{none}

\begin{CCSXML}
	<ccs2012>
	<concept>
	<concept_id>10003752.10003809.10010170</concept_id>
	<concept_desc>Theory of computation~Parallel algorithms</concept_desc>
	<concept_significance>500</concept_significance>
	</concept>
	</ccs2012>
\end{CCSXML}

\ccsdesc[500]{Theory of computation~Parallel algorithms}

\author{Barbara Geissmann}
\affiliation{%
	\institution{Department of Computer Science, ETH Zurich}
	\city{Zurich}
	\state{Switzerland}
}

\author{Lukas Gianinazzi}
\affiliation{%
	\institution{Department of Computer Science, ETH Zurich}
	\city{Zurich}
	\state{Switzerland}
}
\email{lukas.gianinazzi@inf.ethz.ch}

\keywords{Minimum Cut; Graph Algorithms; Minimum Path Data Structure; Parallel Algorithms; Cache-oblivious Algorithms}

\title{Parallel Minimum Cuts in Near-linear Work and Low Depth}

\begin{abstract}
	We present the first \emph{near-linear work} and \emph{poly-logarithmic depth} algorithm for computing a \emph{minimum cut}
in a graph, while previous parallel algorithms with poly-logarithmic depth required at least quadratic work in the number of vertices. 
	
	In a graph with $n$ vertices and $m$ edges, our algorithm computes the correct result with high probability in 
$O(m \log^4 n)$ work and $O(\log^3 n)$ depth.	
	This result is obtained by parallelizing a data structure that aggregates weights along paths in a tree and by exploiting the connection 
between minimum cuts and approximate maximum packings of spanning trees.
	
	In addition, our algorithm improves upon bounds on the number of cache misses incurred to compute a minimum cut.
\end{abstract}

\maketitle

\section{Introduction}

Two trends have emerged in microprocessor design in the last two decades:
(1) larger caches allow fast access to recently used memory locations and (2) many processing elements can be placed on the same chip, allowing for massively parallel processing. This has led to interest in both algorithms that take caches into account and parallel algorithm in a variety of different settings.

We consider \emph{shared-memory parallel} algorithms for computing a \emph{minimum cut} -- a fundamental subject in graph theory, that has many applications in practice, such as in network reliability~\cite{DBLP:journals/siamcomp/Karger99} and cluster analysis~\cite{DBLP:conf/sigir/Botafogo93,DBLP:journals/ipl/HartuvS00,DBLP:conf/ismb/SharanS00}. Our algorithm is based on one of the fastest known minimum cut algorithms, Karger's algorithm~\cite{DBLP:journals/siamcomp/Karger99}. It exploits a random edge-sampling technique and returns the correct result with high probability. Recently, we presented a cache-efficient variant of that algorithm~\cite{DBLP:conf/ciac/GeissmannG17}. Now, we build on that result by parallelizing a key data structure in the algorithm and obtain a parallel minimum cut algorithm that has low overhead compared to the sequential one (only logarithmic in the number of vertices).

We identify two main challenges in parallelizing graph algorithms. The first challenge is how to parallelize graph searches, since traversing a graph in parallel is problematic, especially when the graph has large diameter. The second challenge is that many graph algorithms (including those for minimum cuts~\cite{karger2000minimum,DBLP:journals/jacm/StoerW97}) employ intricate data structures for good performance. 
This is also problematic, because repeatedly accessing a data structure creates a sequential bottle-neck or can lead to concurrent accesses.

Our parallel minimum cut algorithm solves the first challenge by computing spanning trees that determine the order in which the edges of the input graph are accessed. In contrast to graphs, spanning trees can be traversed efficiently. Additionally, using parallel sorting, we rearrange the edges of the input graph to the order dictated by the traversal of the spanning tree, which avoids having to naively search the graph.

To solve the second challenge, we perform many data structure operations at once and in parallel. This works because the control flow of our algorithm does not depend on the result of the data structure operations until the very end, when the results from all data structure operations is aggregated efficiently in parallel. 

\subsection{Preliminaries}

\subsubsection{Graphs}
We consider an undirected weighted graph $G$ with vertices $V$, edges $E$, and positive edge weights $w\colon E \mapsto \mathbb{N}^{+}$. The number of vertices $|V|$ is $n$ and the number of edges $|E|$ is $m$. 

A nonempty proper subset of the vertices $V$ is a \emph{cut} $C$ of the graph $G$. A cut $C$ induces a partition of the vertices into two nonempty sets $C$ and $\bar C = V-C$. An edge $\{u, v\}$ that has endpoints in different parts of the partition ($u \in C$ and $v \in \bar C$) \emph{crosses} the cut $C$: it is a \emph{crossing edge}.
The total weight of the crossing edges of a cut $C$ is the {\em value} of $C$. A cut of smallest value is a \emph{minimum cut}. In particular, a disconnected graph has a minimum cut of value $0$.

\begin{figure}[b]
	\resizebox{0.8\linewidth}{!} {
\begin{tikzpicture}


\GraphInit[vstyle=hasse]
 \SetUpEdge[lw         = 1.2pt,
            color      = black,
            labelcolor = white,
            labeltext  = black,
            labelstyle = {fill=white}]

 \Vertex[x=2, y=1]{G}
 \Vertex[x=3, y=3]{A} 
 \Vertex[x=6, y=1]{P}
 \Vertex[x=4, y=1]{C}
 \Vertex[x=7, y=3]{Q}
 \Vertex[x=8, y=1]{E}
 \Edges[label=1](Q, E)
 \AddVertexColor{black}{A,G,C}
 \Edges[label=3](G,C) \Edges[label=3](A,G,C) \Edges[label=2](P,E)
 \Edges[label=1](Q,P) \Edges[label=1,style=dashed,color=red](Q, A) \Edges[label=1,style=dashed,color=red](C, P) \Edges[label=2](C,A)
\end{tikzpicture}
}
\caption{The vertex shading indicates the vertex partition of the minimum cut, which has value $2$. The crossing edges are dashed.}
\end{figure}
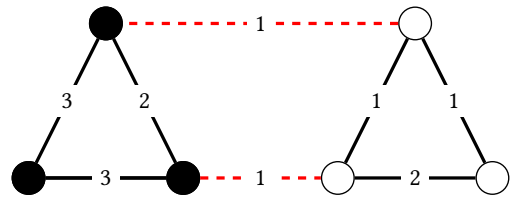

In this work, we will also consider directed trees. For two vertices $u$ and $v$ in a directed tree, 
we say that $v$ is a \emph{descendant} of $u$ (and $u$ is an \emph{ancestor} of $v$) if there is a directed path from $u$ to $v$. In particular, every vertex is its own descendant and its own ancestor. We shall denote the set of all descendants of $v$ as $v^{\downarrow}$ and the set of all ancestors of $v$ as $v^{\uparrow}$.

\subsubsection{Model of Computation}

The well-known parallel random access machine (PRAM) model~\cite{Reif:1993:SPA:562546} consists of a set of $p$ processors each connected to an unbounded \emph{shared memory}. This shared memory is organized into word-sized addressable locations. In each time step, every processor can read $O(1)$ memory locations and perform an $O(1)$ computable function on those words (this includes basic arithmetic, logic, control-flow, and addressing computations). Then, each processor can write back into $O(1)$ locations in the shared memory. The processors are \emph{synchronous}, which means that they proceed at the same speed and complete a time step together.

The \emph{runtime} of a PRAM algorithm is the number of time steps until the result is available in the shared memory. The runtime is determined by the processor that needs the most time steps.

We allow for multiple processors to read from the same memory location in the same time step, but forbid writing to the same location in the same time step. This is called the \emph{concurrent-read exclusive-write} (CREW) PRAM setting.

The Work-Depth model~\cite{Blelloch:1996:PPA:227234.227246} abstracts further from a concrete machine. In particular, a computation is viewed as a \emph{directed acyclic graph} (DAG), where each node in the graph corresponds to a constant time operation. The out-edges of a node correspond to the outputs of the operation executed in this node, and its $O(1)$ in-edges correspond to the inputs to the operation. Consequently, the input of an algorithm is given at a designated set of nodes, and the output has to be available at another set of designated nodes. 

The \emph{work} of an algorithm is its number of nodes in the computation DAG (not counting the input nodes). The \emph{depth} of an algorithm is the longest path from an input node to an output node.

Observe that the Work-Depth model and the PRAM model are closely related, as every PRAM algorithm can be viewed as generating a computation DAG. Conversely, the work and depth bounds translate to PRAM bounds. An algorithm with work $W$ and depth $D$ takes $O(W/p + D)$ time using $p$ processors in CREW PRAM~\cite{DBLP:journals/jacm/Brent74, Blelloch:1996:PPA:227234.227246}. 

\subsubsection{Randomization} 
We assume that in each time step, each processor has access to a uniformly random and independent bit. We distinguish between two types of randomized algorithms:
A \textit{Monte Carlo} randomized algorithm returns the correct result \emph{with high probability}. This means that the probability to return the wrong result can be made smaller than $1/n^c$ for any constant $c$. In particular, increasing $c$ by a constant factor only changes the runtime by a constant factor.
A \textit{Las Vegas} randomized algorithm always returns the correct result, but the runtime bounds are probabilistic and hold with high probability.

\subsection{Related Work}

\subsubsection{Relation to Maximum Flow} The minimum cut problem is a variant of the minimum $s$-$t$-cut problem, where the two designated vertices $s$ and $t$ must be in different parts of the partition.

The well-known Minimum-Cut-Maximum-Flow theorem~\cite{papadimitriou1998combinatorial} says that the value of a minimum $s$-$t$ cut equals the value of a maximum $s$-$t$ flow in the same network. Many maximum $s$-$t$ flow algorithms exist~\cite{DBLP:journals/jacm/GoldbergT88,DBLP:books/el/leeuwen90/KarpR90,dinic1970algorithm}, the best of which obtain $O(mn + n^2 \log n)$ runtime in general~\cite{DBLP:journals/jacm/GoldbergT88}.

A maximum $s$-$t$ flow can be computed with $O(n^2 \log n)$ depth~\cite{DBLP:journals/jal/ShiloachV82a}, which provides some speedup for denser graphs. When the value of the maximum flow is small, better bounds are obtained~\cite{DBLP:books/el/leeuwen90/KarpR90}.
\subsubsection{Deterministic Minimum Cut} 
A minimum cut can be computed by fixing an arbitrary vertex $s$ and computing a maximum $s$-$t$ flow for all different vertices $t\ne s$. In general, such an approach leads to work $\Omega(m n^2)$ using the known algorithms. However, the ideas from computing maximum flows can be adapted~\cite{DBLP:journals/jal/HaoO94} to a sequential algorithm with work $O(mn \log (n^2/m))$ .

Slightly better bounds of $O(mn + n^2 \log n)$ are obtained by a relatively simple approach based on a graph search~\cite{DBLP:journals/jacm/StoerW97}. This approach is a simplification of an approach by Nagomochi and Ibaraki~\cite{DBLP:conf/sigal/NagamochiI90}.

For \emph{unweighted} graphs, a recent result~\cite{DBLP:conf/stoc/KawarabayashiT15} obtains near-linear work for a deterministic (sequential) minimum cut algorithm.

\subsubsection{Randomized Minimum Cut} 
Randomized algorithms obtain both better work and depth than deterministic algorithms. Karger and Stein~\cite{karger1996new} give a Monte Carlo algorithm with  $O(n^2 \log ^ 3 n)$  work and $O(\log ^ 3 n)$ depth, which is faster than any known maximum flow algorithm when $m=\Omega(n \log^3 n)$.
Karger's algorithm~\cite{karger2000minimum} has the best known sequential bounds: it takes $O(m \log ^3 n)$ work. However, the parallel variant of that algorithm uses $O(n^2 \log n)$ work to obtain $O(\log ^ 3 n)$ depth.
\subsection{Our Contributions} 

\renewcommand{\arraystretch}{1.2}
\begin{table}[b]
	\vspace{-1em}	
	\centering
	\small
	\tabcolsep=3mm
	\begin{tabular}{lccc} %
		\toprule
		& Work & Depth \\
		\midrule
		Lowest Work~\cite{karger2000minimum} & $\Theta(m \log^3 n)$ & $\Theta(m \log n)$  \\
		
		Best Previous Polylog-Depth ~\cite{karger2000minimum} & $\Theta(n^2 \log n)$ & $\Theta(\log^3 n)$ \\
		
		\emph{This Paper} & $O(m \log^4 n)$ & $O(\log^3 n)$  \\
	\end{tabular}
	\vspace{1em}
\caption{Bounds for Computing a Minimum Cut. All algorithms are randomized and return correct results with high probability.}
\label{tab:bounds}
\vspace{-1em}
\end{table}
Our main contribution is a randomized parallel minimum cut algorithm that has near-linear work $O(m \log^4 n)$ and low depth $O(\log ^ 3 n)$. It returns the correct result with high probability. Previous parallel algorithms~\cite{karger2000minimum,karger1996new} with poly-logarithmic depth have quadratic work $\Omega(n^2 \log n)$ in the number of vertices. Our new algorithm, presented in \Cref{sec:parallel-mincuts}, is thus much more work-efficient when the graphs are not too dense, that is, $m \leq o(n^2 / \log ^ 3  n)$. \Cref{tab:bounds} compares our results to previous work.

As part of our solution, we present a parallel algorithm to solve a type of constrained minimum cut problem. Given a spanning tree $T$ of the graph, we find the cut of smallest value under the additional constraint that at most $2$ crossing edges are part of the spanning tree $T$. See \Cref{fig:constrained-cut} for an illustration of the problem. Our algorithm has work $O(m \log^2 n)$ and depth $O(\log ^2 n)$, with high probability. The best previous algorithm~\cite{karger2000minimum} with poly-logarithmic depth has $\Theta(n^2)$ work and $\Omega(\log n)$ depth.

We solve the constrained minimum cut problem by parallelizing a data structure that maintains aggregates of weights along paths in a tree. In this data structure, we consider a fixed tree where every vertex has a variable weight. The queries find the smallest weight in a path of the tree. The updates add a fixed weight to every vertex in a path of the tree (potentially changing many weights). In \Cref{sec:parallel-minpath}, we show how to answer a batch of $k$ mixed queries and updates on a tree of $n$ vertices with $O(k \log n (\log n + \log k) + n \log n)$ work and $O(\log^2 n + \log n \log k)$ depth. Hence, the average work per query and update is $O(\log^2 k)$, when $k \geq \Omega(n)$.

Our new approach also improves the number of cache misses to compute a minimum cut in the cache-oblivious model~\cite{Frigo99cache-obliviousalgorithms,Brodal}, where width of a cache line is $B$ and the size of the cache is $M$. As discussed in \Cref{sec:CO}, it incurs $O(\lceil (m \log^ 4 n) / B \rceil )$ cache misses and takes $O(m \log ^ 4 n)$ computation time. The best previous result~\cite{DBLP:conf/ciac/GeissmannG17} incurs $\Theta(\lceil { (m (\log ^ 4 n) \log _M n) / B} \rceil)$ cache misses and takes $\Theta(m \log ^ 5 n)$ computation time. Hence, the new algorithm improves the number of cache misses by a factor $\Theta(\log n / \log M)$, and the computation time by a factor $\Theta(\log n)$.

		\begin{figure}[t]
		 
			\begin{minipage}[t]{.49\linewidth}
			\centering
			\resizebox{\linewidth}{!} {
			\begin{tikzpicture}
			
			\definecolor{mygray}{rgb}{0.5,0.5,0.5}
			
			\GraphInit[vstyle=Hasse]
			\SetUpEdge[lw         = 1pt,
			color      = black,
			labelcolor = white,
			labeltext  = white,
			labelstyle = {sloped}]

			\begin{scope}
				\SetVertexNormal[FillColor  = black]
				\Vertex[x=-0.6, y=0.1, Math]{w_4}
				\Vertex[x=-1.55, y=-1, Math]{w_6}
				
			\end{scope}
			
			\begin{scope}
			 \SetVertexNormal[FillColor  = white]
			 \Vertex[x=1.2, y=-1, Math]{w_7}
			\Vertex[x=0.4, y=1.2, Math]{w_1}
			\Vertex[x=1.4, y=2.2, Math]{w_0}		
			\Vertex[x=2.8, y=-1, Math]{w_8}
			\Vertex[x=2, y=0.5, Math]{w_5}
			\Vertex[x=3.4, y=1.2, Math]{w_3}			
			\end{scope}

			\begin{scope}
			\tikzstyle{EdgeStyle}=[color=black]
				\Edges[lw=3pt](w_0,w_3)

					\Edges[lw=3pt](w_1, w_0)
					\Edges[lw=3pt](w_0, w_5)
					\Edges[lw=3pt](w_4, w_6)
					\Edges[lw=3pt](w_5, w_8)
					\Edges[lw=3pt](w_5, w_7)
			\end{scope}

			\begin{scope}
			\tikzstyle{EdgeStyle}=[color=black,style=dashed]
				\Edges[lw=3pt](w_1, w_4)
				
			\end{scope}
			
			\begin{scope}
			\tikzstyle{EdgeStyle}=[color=mygray,style=dashed]
				\Edges[](w_4, w_5)
				\Edges[](w_4, w_7)
				\Edges(w_6, w_7)
			\end{scope}
			
					\begin{scope}
			\tikzstyle{EdgeStyle}=[color=mygray]
				
				\Edges[](w_3, w_8)
				\Edges(w_7, w_8)
				
			\end{scope}

			\node[draw=none, fill=none] at (0.6, -2) {\huge One edge of the bold tree is cut.};

			\end{tikzpicture}
			}
		\end{minipage}%
		\hfill
  		\begin{minipage}[t]{.51\linewidth}
  			\centering
  			\resizebox{\linewidth}{!} {
			\begin{tikzpicture}
			
			\definecolor{mygray}{rgb}{0.5,0.5,0.5}
			
			\GraphInit[vstyle=Hasse]
			\SetUpEdge[lw         = 1pt,
			color      = black,
			labelcolor = white,
			labeltext  = white,
			labelstyle = {sloped}]

			\begin{scope}
				\SetVertexNormal[FillColor  = black]
				\Vertex[x=-0.6, y=0.1, Math]{w_4}
				\Vertex[x=-1.55, y=-1, Math]{w_6}
				\Vertex[x=1.2, y=-1, Math]{w_7}
				\Vertex[x=2.8, y=-1, Math]{w_8}
				\Vertex[x=2, y=0.5, Math]{w_5}
			\end{scope}
			
			\begin{scope}
			 \SetVertexNormal[FillColor  = white]
			\Vertex[x=0.4, y=1.2, Math]{w_1}
			\Vertex[x=1.4, y=2.2, Math]{w_0}		

			\Vertex[x=3.4, y=1.2, Math]{w_3}			
			\end{scope}

			\begin{scope}
			\tikzstyle{EdgeStyle}=[color=black]
				\Edges[lw=3pt](w_0,w_3)

					\Edges[lw=3pt](w_1, w_0)
					\Edges[lw=3pt](w_4, w_6)
					\Edges[lw=3pt](w_5, w_8)
					\Edges[lw=3pt](w_5, w_7)
			\end{scope}

			\begin{scope}
			\tikzstyle{EdgeStyle}=[color=black,style=dashed]
				\Edges[lw=3pt](w_1, w_4)
				\Edges[lw=3pt](w_0, w_5)
			\end{scope}
			
			\begin{scope}
			\tikzstyle{EdgeStyle}=[color=mygray,style=dashed]
				\Edges[](w_3, w_8)
			\end{scope}
			
					\begin{scope}
			\tikzstyle{EdgeStyle}=[color=mygray]
				\Edges[](w_4, w_7)
				\Edges(w_6, w_7)
				\Edges[](w_4, w_5)
				\Edges(w_7, w_8)
			\end{scope}

			\node[draw=none, fill=none] at (0.6, -2) {\huge Two edges of the bold tree are cut.};
			\end{tikzpicture}

			}
		\end{minipage}
			
			\caption{Example illustrating the constrained optimization problem. The cuts are illustrated by the vertex shading. The left cut cuts one edge of the spanning tree and the right one cuts two edges of the spanning tree. Cuts that cut more edges of the spanning tree do not have to be considered in the constrained optimization problem. }
			\label{fig:constrained-cut}
		\end{figure}

\section{Background}\label{sec:background}
\subsection{Karger's Minimum Cut Algorithm}\label{sec:kargers-algorithm}

On a high level, Karger's randomized algorithm~\cite{karger2000minimum} consists of two main steps:
\begin{enumerate}
	\item Find a set of spanning trees $S$ (with a special property as described below) in a graph $G$. Each of those trees gives rise to a more constrained optimization problem:
	\item For each spanning tree $T \in S$, compute the smallest cut $C$ of $G$ that has at most two crossing edges in $T$ (at most two edges of $T$ cross $C$). See \Cref{fig:constrained-cut} for an illustration.
\end{enumerate}

A key insight of Karger is a randomized procedure to find a set $S$ of only $O(\log n)$ spanning trees such that the smallest cut found in the second step is a minimum cut with high probability. 

The set $S$ is constructed using an approximate maximum packing procedure~\cite{DBLP:journals/mor/PlotkinST95}. This tree-packing procedure consists of sampling a sparse subgraph and performing a series of $O(\log ^2 n)$ minimum spanning tree computations. It can therefore be parallelized using known algorithms.

\begin{lemma}[Karger~\cite{karger2000minimum}] \label{lem:sampling}
	In $O(\log^3 n )$ time using $O(m + n \log n )$ processors, a set of spanning trees $S$ of a graph $G$ can be computed with the following properties:
	\begin{enumerate}[label=\alph*)]
		\item The set $S$ has size $O(\log n)$.
		\item With high probability, there is a tree $T\in S$ such that a minimum cut of $G$ cuts at most $2$ edges of $T$.
	\end{enumerate}
\end{lemma}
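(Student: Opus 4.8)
The plan is to reconstruct Karger's argument — a random sparsification, an approximate maximum packing of spanning trees, and a cut‑counting estimate — and to note that each step parallelizes with off‑the‑shelf primitives (parallel prefix sums, parallel minimum spanning trees, and the Plotkin–Shmoys–Tardos multiplicative‑weights framework). First I would reduce to a graph with few edges: after obtaining a constant‑factor estimate $\tilde c$ of the minimum cut value $c$ (a standard preprocessing step), form a \emph{skeleton} $H$ on the vertex set $V$ by keeping each unit of edge capacity independently with probability $p=\Theta(\log n/(\epsilon^2\tilde c))$ for a small constant $\epsilon$. By Karger's sampling theorem, with high probability every cut of $H$ has value within a factor $1\pm\epsilon$ of $p$ times its value in $G$; in particular $H$ is connected (so spanning trees of $H$ are spanning trees of $G$), its minimum cut is $\hat c=\Theta(\log n/\epsilon^2)$, and every fixed minimum cut $C^*$ of $G$ has value at most $\frac{1+\epsilon}{1-\epsilon}\hat c$ in $H$. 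Sampling the edges and compacting the surviving $O(n\log n/\epsilon^2)$ of them costs $O(\log n)$ depth and $O(m)$ processors via parallel prefix sums. (If $G$ already has $O(n\log n)$ edges, set $H=G$, $\hat c=c$ and skip this step.)

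Next I would compute a $(1-\epsilon)$‑approximate maximum \emph{fractional} packing of spanning trees of $H$ by the Plotkin–Shmoys–Tardos method: maintain a length on every edge and, over $O(\log^2 n)$ iterations, compute a minimum‑length spanning tree of $H$ and multiplicatively reweight. This yields nonnegative weights $\lambda_1,\dots,\lambda_r$ with $r=O(\log^2 n)$ on spanning trees $T_1,\dots,T_r$ of $H$ that satisfy the packing constraints $\sum_{i:\,e\in T_i}\lambda_i\le w_H(e)$ for every edge $e$ and have total weight $\beta:=\sum_i\lambda_i\ge (1-\epsilon)\,\hat c/2$, where the lower bound uses the Nash–Williams–Tutte fact that the maximum fractional tree packing of a graph is at least half its minimum cut (for any vertex partition into $r'$ parts, the number of crossing edges is at least $\tfrac12 r'\hat c$, so the Tutte–Nash‑Williams ratio is at least $\hat c/2$). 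Using a parallel minimum‑spanning‑tree algorithm of depth $O(\log n)$ with $O(n\log n)$ processors, the whole packing runs in $O(\log^3 n)$ depth.

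Then comes the cut‑counting step. Fix a minimum cut $C^*$ of $G$ and let $x_i$ be the number of edges of $T_i$ crossing $C^*$; since each $T_i$ is a spanning tree, $x_i\ge 1$. Summing the packing constraints over the edges of $H$ crossing $C^*$ gives $\sum_i\lambda_i x_i\le \frac{1+\epsilon}{1-\epsilon}\hat c\le \frac{2(1+\epsilon)}{(1-\epsilon)^2}\beta$, so the $\lambda$‑weighted average of $x_i$ is at most $2+O(\epsilon)$, hence the $\lambda$‑weighted average of $x_i-1$ is at most $1+O(\epsilon)$; by Markov's inequality the trees with $x_i\ge 3$ carry at most a $\tfrac12+O(\epsilon)$ fraction of the total weight $\beta$. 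Thus, for $\epsilon$ a sufficiently small constant, a constant fraction $\gamma>0$ of the weight lies on trees that share at most two edges with $C^*$. Form $S$ by drawing $\Theta(\log n)$ trees from $\{T_1,\dots,T_r\}$ independently, each with probability proportional to $\lambda_i$: then $|S|=O(\log n)$, and the probability that no tree of $S$ two‑respects $C^*$ is at most $(1-\gamma)^{\Theta(\log n)}\le n^{-\Omega(1)}$. The final sampling takes $O(\log n)$ depth, so altogether the construction uses $O(\log^3 n)$ time and $O(m+n\log n)$ processors, which is both parts of the lemma.

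I expect the difficulty to be bookkeeping rather than a single hard idea: one must chain the two approximation parameters (the $\epsilon$ of the skeleton and the $1-\epsilon$ of the packing) so that the weighted average crossing count stays below $3$ with enough Markov slack to leave a constant fraction of good trees, and one must check that every ingredient — the preliminary cut estimate, the sampling and compaction, the $O(\log^2 n)$ parallel MST computations with reweighting, and the final tree sampling — fits within $O(\log^3 n)$ depth and $O(m+n\log n)$ processors. Since the statement is Karger's, the remaining content is exactly this reconstruction together with the parallel‑complexity accounting; the sampling theorem, the Tutte–Nash‑Williams bound, the multiplicative‑weights packing, and the parallel prefix‑sum and MST primitives can all be invoked as known results.
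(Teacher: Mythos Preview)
The paper does not prove this lemma; it is attributed to Karger~\cite{karger2000minimum} and accompanied only by a one-sentence sketch (``sampling a sparse subgraph and performing a series of $O(\log^2 n)$ minimum spanning tree computations'' via the Plotkin--Shmoys--Tardos packing framework). Your reconstruction is faithful to that sketch and to Karger's original argument, so there is nothing further to compare.
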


The sequential bottleneck of Karger's algorithm lies in its second step. Although finding the smallest cut that cuts exactly \emph{one edge} of a given spanning tree turns out to be relatively easy, finding the smallest cut that cuts exactly \emph{two edges} of a given spanning tree is challenging. Karger gives a $O(\log ^ 3 n)$ depth algorithm to do so, but it performs $ \Theta(n^2 \log n)$ work. And for their faster $O(m \log ^ 3 n)$ work algorithm no efficient parallelization is known.

\subsection{Minimum Path}

The problem left open by Karger is how to compute the smallest cut that cuts exactly \emph{two edges} of a given spanning tree in parallel. The \emph{sequential} $O(m \log ^ 3 n)$ work minimum cut algorithm builds a data structure called {\em Minimum Path} on each of the spanning trees obtained in the first step and maintains weights on the vertices of the spanning tree, which correspond to certain estimates of the minimum cut.

Given a rooted tree $T$ with $n$ vertices where each vertex $v$ has a weight $w(v)$, a \emph{Minimum Path} structure supports:
	\begin{itemize}
		\item \textsc{MinPath($v$)}: Returns the smallest weight of a vertex on the path from $v$ to the root of $T$. 
	\item \textsc{AddPath($v$, $x$)}: Adds $x$ to the weight of all vertices on the path  from $v$ to the root of $T$.
	\end{itemize}
See \Cref{fig:minpath-structure} for a illustration of the two operations.
	
This problem is a special case of dynamic trees~\cite{DBLP:journals/jcss/SleatorT83}, where $\Theta(\log n)$ time per query and update suffice. That this is optimal in the pointer machine setting follows from a recent lower bound on dynamic prefix sums~\cite{DBLP:journals/siamcomp/PatrascuD06}. The dynamic tree data structure~\cite{DBLP:journals/jcss/SleatorT83} is difficult to parallelize because it is based on tree rotations. Therefore, for our \emph{parallel} Minimum Path structure in \Cref{sec:parallel-minpath}, we will take a different approach.

\begin{figure}[t]
	
	\begin{minipage}[t]{.48\linewidth}
		
		\centering
		\resizebox{\linewidth}{!} {
			
			\begin{tikzpicture}
			
			\GraphInit[vstyle=Dijkstra]
			\SetUpEdge[lw         = 1pt,
			color      = black,
			labelcolor = white,
			labeltext  = black,
			labelstyle = {sloped}]
			
			\fill[bough_4!60,rotate=47] (1.1,0.5) ellipse (1.7 and 0.5);

			\begin{scope}
			\SetVertexNormal[FillColor  = neutral-highlight]
			\Vertex[x=0.4, y=1.2, Math]{w_1}
			\Vertex[x=-0.6, y=0.1, Math]{w_4}
			\Vertex[x=1.4, y=2.2, Math]{w_2}			
			\end{scope}
			
			\Vertex[x=-1.55, y=-1, Math]{w_6}	
			\Vertex[x=1.2, y=-1, Math]{w_7}
			\Vertex[x=2.8, y=-1, Math]{w_8}
			
			\Vertex[x=2, y=0.5, Math]{w_5}
			
			\Vertex[x=3.4, y=1.2, Math]{w_3}			
			
			\Edges[](w_0,w_3)
			
			\Edges[](w_1, w_2)
			\Edges(w_2, w_5, w_7)
			
			\Edges(w_1,w_4,w_6)
			
			\Edges(w_5, w_8)
			
			\node[draw=none, fill=none] at (-0.7, 	2.1) {\huge{\textsc{MinPath}($v_4$)}};
			
			\begin{scope}
			\tikzstyle{EdgeStyle}=[bend left=13]
			
			\end{scope}
			
			\end{tikzpicture}
		}
	\end{minipage}%
	\hfill
	\begin{minipage}[t]{.52\linewidth}
		\centering
		\resizebox{\linewidth}{!} {
			\begin{tikzpicture}
			\GraphInit[vstyle=Dijkstra]
			\SetUpEdge[lw         = 1pt,
			color      = black,
			labelcolor = white,
			labeltext  = black,
			labelstyle = {sloped}]
			
			\fill[bough_4!60,rotate=-65] (0.3,2.05) ellipse (2 and 0.65);
			
			\begin{scope}
			\SetVertexNormal[FillColor  = neutral-highlight]
			\Vertex[x=2.8, y=-1, Math]{w_8+x}
			\Vertex[x=2, y=0.5, Math]{w_5+x}
			\Vertex[x=1.4, y=2.2, Math]{w_2+x}			
			\end{scope}
			
			\Vertex[x=0.4, y=1.2, Math]{w_1}
			\Vertex[x=-0.6, y=0.1, Math]{w_4}
			\Vertex[x=-1.55, y=-1, Math]{w_6}
			\Vertex[x=1.2, y=-1, Math]{w_7}
			\Vertex[x=3.4, y=1.2, Math]{w_3}			
			
			\Edges[](w_2+x,w_3)
			\Edges[](w_1, w_2+x)
			\Edges(w_2+x, w_5+x, w_7)
			\Edges(w_1,w_4,w_6)
			\Edges(w_5+x, w_8+x)
			
			\node[draw=none, fill=none] at (-1, 2.1) {\huge{\textsc{AddPath}($v_8$, $x$)}};
			
			\begin{scope}
			\tikzstyle{EdgeStyle}=[bend left=13]
			
			\end{scope}
			
			\end{tikzpicture}
		}
	\end{minipage}
	\caption {Illustation of Minimum Path operations. Node $v_i$ stores value $w_i$. \emph{Left:} MinPath($v_4$) computes the minimum of the weights of the highlighted nodes. \emph{Right:} AddPath($v_8$, $x$) adds $x$ to the weight of the highlighted nodes.}
	\label{fig:minpath-structure}
\end{figure}
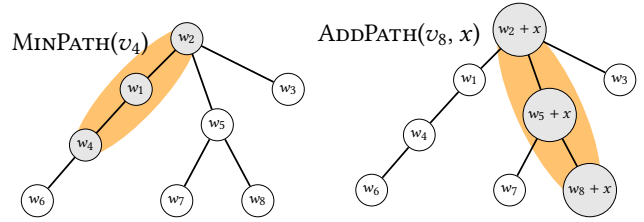

\subsection{Monotone Minimum Paths}\label{sec:monotone-minpaths}

We proceed with an overview of the data structure underlying the cache-oblivious algorithm minimum cut algorithm~\cite{DBLP:conf/ciac/GeissmannG17}. It is designed such that each operation accesses the memory in a monotone order (i.e., if an operation accesses location $x$ before location $y$ no operation accesses location $y$ before location $x$). This enables executing a batch of operations by sweeping through the memory only once, thus incurring a small number of cache misses. Similarly, this \emph{monotonicity} is also crucial for our parallel variant in \Cref{sec:parallel-minpath}.

\subsubsection{Tree Decomposition} The first step is to decompose the given tree $T$ into vertex-disjoint paths. To query or update a path $P$ in the tree $T$, simply perform an operation for each path in the decomposition that intersects the path $P$. 

It is possible to decompose the tree such that each root-to-leaf path is decomposed into $O(\log n)$ parts~\cite{DBLP:journals/jcss/SleatorT83,DBLP:conf/ciac/GeissmannG17}. \Cref{fig:decomp} illustrates such a decomposition and how a query on the tree corresponds to a set of queries on the paths in the decomposition. We present a parallel procedure to compute a tree decomposition in \Cref{sec:tree-decomposition}.

	\begin{figure}[t]
	\centering
	
	\begin{minipage}[t]{.5\linewidth}

			\resizebox{\linewidth}{!} {

			\begin{tikzpicture}

			\fill[bough_4!60,rotate=45] (1.1,0.5) ellipse (1.8 and 0.5);
			\fill[bough_4!60,rotate=100] (2.5,-1.8) ellipse (0.9 and 0.45);
			
			\GraphInit[vstyle=Dijkstra]
			
			\SetUpEdge[lw         = 1pt,
			color      = black,
			labelcolor = white,
			labeltext  = black,
			labelstyle = {sloped}]

			\begin{scope}
				\SetVertexNormal[FillColor  = pastel_1]
				\Vertex[x=-1.55, y=-1, Math]{w_6}			
			\end{scope}
			
			\begin{scope}
				\SetVertexNormal[FillColor  = pastel_1, LineWidth=0.4mm, LineColor=black!90]
				\Vertex[x=0.4, y=1.2, Math]{w_1}
				\Vertex[x=-0.6, y=0.1, Math]{w_4}			
			\end{scope}

			\begin{scope}
				\SetVertexNormal[FillColor  = pastel_2]
				\Vertex[x=1.2, y=0.1, Math]{w_7}
			\end{scope}
			
			\begin{scope}
				\SetVertexNormal[FillColor  = pastel_5]
				\Vertex[x=2.8, y=0.1, Math]{w_8}
				\Vertex[x=3.4, y=-1, Math]{w_9}
			\end{scope}

			\begin{scope}
				\SetVertexNormal[FillColor  = pastel_6, LineWidth=0.4mm, LineColor=black!90]
				\Vertex[x=1.2, y=3.4, Math]{w_r}	
				\Vertex[x=1.5, y=2.2, Math]{w_0}	
			\end{scope}
			
			\begin{scope}
				\SetVertexNormal[FillColor  = pastel_6]
				\Vertex[x=2, y=1.2, Math]{w_5}
			\end{scope}

			\begin{scope}
				\SetVertexNormal[FillColor  = pastel_3]
				\Vertex[x=3.4, y=1.2, Math]{w_3}			
			\end{scope}

			\Edges[style=dotted](w_0,w_3)

			\Edges[style=dotted](w_1, w_0)
			\Edges(w_r, w_0, w_5)
			\Edges[style=dotted](w_5, w_7)
			
			\Edges(w_1,w_4,w_6)
			
			\Edges(w_8, w_9)
				
			\Edges[style=dotted](w_5, w_8)

			\begin{scope}
			\tikzstyle{EdgeStyle}=[bend left=13]

			\end{scope}

			\end{tikzpicture}
		}
		\end{minipage}
		\hfill
		\begin{minipage}[t]{.47\linewidth}
		\resizebox{\linewidth}{!} {
		
			\begin{tikzpicture}
			
			\tikzstyle{myarrows}=[line width=1.5mm,draw=black!80,-triangle 45,postaction={draw, line width=3mm, shorten >=4mm, -}]

			\draw[myarrows] (-2.1,-0.5) -- (-0.8,-0.5);
			
			\fill[bough_4!60,rotate=0] (0.6,0) ellipse (0.8 and 0.5);
			\fill[bough_4!60,rotate=0] (0.6,1.3) ellipse (0.8 and 0.5);
			
			\GraphInit[vstyle=Dijkstra]

			\SetUpEdge[lw         = 1pt,
			color      = black,
			labelcolor = white,
			labeltext  = black,
			labelstyle = {sloped}]

			\begin{scope}
				\SetVertexNormal[FillColor  = pastel_1]
				\Vertex[x=2.4, y=0, Math]{w_6}			
			\end{scope}
			
			\begin{scope}
				\SetVertexNormal[FillColor  = pastel_6]
				\Vertex[x=2.4, y=1.3, Math]{w_5}
			\end{scope}
			
			\begin{scope}
				\SetVertexNormal[FillColor  = pastel_1, LineWidth=0.4mm, LineColor=black!90]
				\Vertex[x=0, y=0, Math]{w_1}
				\Vertex[x=1.2, y=0, Math]{w_4}			
			\end{scope}
			
			\begin{scope}
				\SetVertexNormal[FillColor  = pastel_6, LineWidth=0.4mm, LineColor=black!90]
				\Vertex[x=0, y=1.3, Math]{w_r}	
				\Vertex[x=1.2, y=1.3, Math]{w_0}
			\end{scope}

			\begin{scope}
				\SetVertexNormal[FillColor  = pastel_3]
				\Vertex[x=0, y=-1.1, Math]{w_3}			
			\end{scope}
			
			\begin{scope}
				\SetVertexNormal[FillColor  = pastel_2]
				\Vertex[x=0, y=-3, Math]{w_7}
			\end{scope}
			
			\begin{scope}
				\SetVertexNormal[FillColor  = pastel_5]
				\Vertex[x=0, y=-2, Math]{w_8}
				\Vertex[x=1.2, y=-2, Math]{w_9}
			\end{scope}

			\Edges(w_r, w_0, w_5)
						
			\Edges(w_1,w_4,w_6)
			
			\Edges(w_8, w_9)
			
			\end{tikzpicture}
		}
		\end{minipage}		
		\caption{An operation in the tree (left) is decomposed into operations on the paths in the decomposition (right).}
		\label{fig:decomp}
		\vspace{-1em}
	\end{figure}
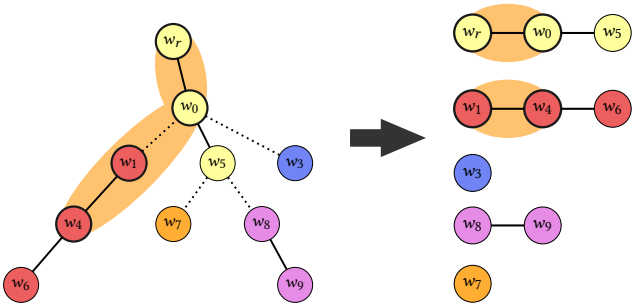

\subsubsection{List view}
We view each path of the decomposition as a list, where the vertex closest to the root is at the beginning (front) of the list. We call the deductive problem about querying and updating prefixes of lists \emph{Minimum Prefix}.
Specifically, we define \textit{MinPrefix} analogously to MinPath and \textit{AddPrefix} analogously to AddPath on a list $L$ of vertices $(v_1, \dotsc, v_n)$ with weights $(w_1, \dotsc, w_n)$:

\vspace{2em}
\noindent
\textsc{MinPrefix}($v_i$) returns the smallest weight in the prefix $w_1, \dotsc, w_i$. The example below illustrates MinPrefix($v_6$) on a list with $8$ nodes.
	
		\begin{center}		
  		\begin{minipage}[t]{.62\linewidth}
  			\centering
  			\resizebox{1.0\linewidth}{!} {
			\begin{tikzpicture}
				\definecolor{mygray}{rgb}{0.5,0.5,0.5}

			\GraphInit[vstyle=Dijkstra]
			\SetUpEdge[lw         = 1pt,
			color      = black,
			labelcolor = white,
			labeltext  = black,
			labelstyle = {sloped}]

			\fill[bough_4!60,rotate=0] (-0.9,-1.1) ellipse (2.6 and 0.55);
			
			\begin{scope}
				\SetVertexNormal[LineWidth=0.4mm, FillColor=neutral-highlight,LineColor=black!90]
				\Vertex[x=1.35, y=-1.1, Math]{w_6}
						\Vertex[x=-3.15, y=-1.1, Math]{w_1}
			\Vertex[x=-2.25, y=-1.1, Math]{w_2}
			\Vertex[x=-1.35, y=-1.1, Math]{w_3}
			\Vertex[x=-0.4, y=-1.1, Math]{w_4}
			\Vertex[x=0.5, y=-1.1, Math]{w_5}
			
			\end{scope}

			\Vertex[x=2.25, y=-1.1, Math]{w_7}
			\Vertex[x=3.15, y=-1.1, Math]{w_8}	
						
			\begin{scope}
			\tikzstyle{EdgeStyle}=[->, color=mygray]
			
			\Edges(w_8,w_7, w_6, w_5, w_4, w_3, w_2, w_1)

			\end{scope}

			\end{tikzpicture}
		}
	\end{minipage} \\
	\end{center}

	\noindent
	\textsc{AddPrefix}($v_i$, $x$) adds $x$ to the weights of the vertices in the prefix $v_1, \dotsc, v_i$. The example below illustrates AddPrefix($v_4$, $x$) on a list with $8$ nodes.
	
	\begin{center}
  	\begin{minipage}[t]{.8\linewidth}
	\resizebox{1.0\linewidth}{!} {
	
			\begin{tikzpicture}
			\definecolor{mygray}{rgb}{0.5,0.5,0.5}

			\GraphInit[vstyle=Dijkstra]
			\SetUpEdge[lw         = 1pt,
			color      = black,
			labelcolor = white,
			labeltext  = black,
			labelstyle = {sloped}]
		
			\fill[bough_4!60,rotate=0] (-2.7,-1.1) ellipse (2.6 and 0.7);			
			
			\begin{scope}
				\SetVertexNormal[FillColor  = neutral-highlight, LineWidth=0.4mm, LineColor=black!90]
				
			\Vertex[x=-4.95, y=-1.1, Math]{w_1+x}
			\Vertex[x=-3.55, y=-1.1, Math]{w_2+x}
			\Vertex[x=-2.1, y=-1.1, Math]{w_3+x}
			\Vertex[x=-0.65, y=-1.1, Math]{w_4+x}

			\end{scope}
			\Vertex[x=1.35, y=-1.1, Math]{w_6}
			\Vertex[x=0.5, y=-1.1, Math]{w_5}
			\Vertex[x=2.25, y=-1.1, Math]{w_7}
			\Vertex[x=3.15, y=-1.1, Math]{w_8}
						
			\begin{scope}
			\tikzstyle{EdgeStyle}=[->, color=mygray]
			
			\Edges(w_8,w_7, w_6, w_5, w_4+x, w_3+x, w_2+x, w_1+x)

			\end{scope}

			\end{tikzpicture}
		}		
	\end{minipage}
	\end{center}

To perform Minimum Prefix operations on a list $l$ we build a complete binary tree $B$ on top the vertices $v_1, \dotsc, v_n$, such that the vertices form the leaves of the tree $B$. This tree holds auxiliary information that allows us to perform the prefix operations quickly. A naive attempt would be to store in each inner node of the tree $B$ the minimum value in its subtree. This approach answers queries in $O(\log n)$ time, but an update takes $\Omega(n)$ time (for example when the prefix covers the whole list).

A better approach is to store only differences of minima: Each inner node stores the difference between the smallest leaf in its right subtree and the smallest leaf in its left subtree. With this approach, only $O(\log n)$ values change when the list is updated, namely those on the path from the root of the binary tree to the leaf corresponding to the last vertex that is updated. See \Cref{fig:update} for an illustration.

Moreover, these differences (more specifically their signs) suffice to determine which subtree, and consequently, which vertex of the list contains the smallest value.

In the following, we we describe how to efficiently maintain these difference values and how to use them to compute the smallest weight in a given prefix of the list.

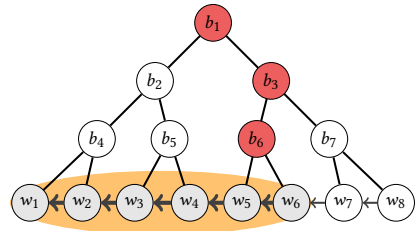
\begin{figure}[t]
	\centering
	\resizebox{0.65\linewidth}{!} {
		\begin{tikzpicture}
		\GraphInit[vstyle=Dijkstra]
		\SetUpEdge[lw         = 1pt,
		color      = black,
		labelcolor = white,
		labeltext  = black,
		labelstyle = {sloped}]
		
		\Vertex[x=2, y=0, Math]{b_7}

		\fill[bough_4!60,rotate=0] (-0.9,-1.1) ellipse (2.6 and 0.55);

		\begin{scope}
		\SetVertexNormal[FillColor  = neutral-highlight]
		
		\Vertex[x=-3.15, y=-1.1, Math]{w_1}
		\Vertex[x=-2.25, y=-1.1, Math]{w_2}
		\Vertex[x=-1.35, y=-1.1, Math]{w_3}
		\Vertex[x=-0.4, y=-1.1, Math]{w_4}
		\Vertex[x=0.5, y=-1.1, Math]{w_5}
		\Vertex[x=1.35, y=-1.1, Math]{w_6}	
		\end{scope}
		
		\Vertex[x=-1, y=1, Math]{b_2}
		\Vertex[x=-2, y=0, Math]{b_4}
		\Vertex[x=-0.75, y=0, Math]{b_5}
		\Vertex[x=2.25, y=-1.1, Math]{w_7}
		\Vertex[x=3.15, y=-1.1, Math]{w_8}
		
		\begin{scope}
		\SetVertexNormal[FillColor  = pastel_1]
		\Vertex[x=1, y=1, Math]{b_3}
		\Vertex[x=0.75, y=0, Math]{b_6}
		\Vertex[x=0, y=2, Math]{b_1}	
		
		\end{scope}
		
		\Edges[](b_1, b_2)
		\Edges[](b_1, b_3)
		\Edges[](b_4, b_2, b_5)
		\Edges(b_3, b_6)
		\Edges(b_3, b_7)
		
		\Edges(w_1, b_4, w_2)
		\Edges(w_3, b_5, w_4)
		\Edges(w_5, b_6, w_6)
		\Edges(w_7, b_7, w_8)
		
		\begin{scope}
		\tikzstyle{EdgeStyle}=[->, color=darkgray]
		\Edges(w_8,w_7, w_6)
		\end{scope}
		
		\begin{scope}
		\tikzstyle{EdgeStyle}=[->, color=darkgray]
		\Edges[lw=2pt](w_6, w_5, w_4, w_3, w_2, w_1)
		\end{scope}
		\end{tikzpicture}

	}
	
	\caption{At each inner node $b_i$, store the difference between the smallest weight in $b_i$'s right and left subtree. An update changes the differences on a single root-to-leaf path. The example above illustrates an update to the first six elements in the list.}
	\label{fig:update}
\end{figure}

\subsubsection{AddPrefix}\label{sec:seqAddPrefix} For any node $b$ with right child $r$ and left child~$l$, let $\min_i(b)$ be the smallest weight of any descendant of $b$ after the $i$-th update and let $\min_0(b)$ be the smallest weight of any descendant of $b$ in the initial state. Recall that this value is not stored directly for efficiency reasons, but instead, the data structure stores in each node $b$ at time $i$ the value \[\Delta_i(b) := \text{min}_i(r)-\text{min}_i(l) \enspace .\]

Let \textsc{AddPrefix}($v, x$) be the $i$-th update and assume that we know by how much the update changes the minimum in the right subtree (i.e. $\min_i(r)-\min_{i-1}(r)$) and the minimum in the left subtree (i.e. $\min_i(l)-\min_{i-1}(l)$). Then, we can derive by how much the difference between the two subtree minima changes in this update. Therefore, we define $\phi_i(b)=\text{min}_i(b)-\text{min}_{i-1}(b)$. We have
\begin{align*}
\Delta_i(b) = \Delta_{i-1}(b) + \phi_i(r) - \phi_i(l) \ .
\end{align*}

Of course, it would be too expensive to compute $\Delta_i(b)$ by recursively computing the values $\phi_i$ of both children, since all descendants would be traversed.
However, we observe that at least one child of every node has a trivial value: If all descendant leaves of a node $b'$ are in the prefix of the list that is updated, then $\phi_i(b')=x$. If no descendant leaves of a node $u$ are in the prefix of the list that is updated, then $\phi_i(u)=0$. This means that the value of $\phi_i$ is only non-trivial for the nodes on the path from $v$ to the root (which are also the only nodes where $\Delta$ changes).
Hence, to perform the $i$-th update \textsc{AddPrefix}($v$, $x$), we walk up along this path in the tree starting in the leaf $v$ and compute $\phi_i(b)$ for each node $b$ along this path. For the leaf, this is $\phi_i(v)=x$. For an interior node $b$, \Cref{fig:addpath-same-tree} and  \Cref{fig:addpath-different-tree} illustrate how the value of $\phi_i(b)$ is computed. There are two additional symmetric cases.

\subsubsection{MinPrefix}
 The queries also proceed walking up the path from the last vertex in the prefix to the root. We consider the weights and state of the data structure at a fixed time and for simplicity, we omit the time subscripts.
 
  When computing \textsc{MinPrefix}($v_k$), it is tempting to directly compute for each node along this path the result of \textsc{MinPrefix}($v_k$) restricted to the current subtree (namely $\min_{v_i \in b^\downarrow, i\leq k} w_i )$. Unfortunately, this is not possible using only the value of $\Delta(b)$. Instead, we compute the difference $d(b)$ between this quantity and the minimum of the current subtree $\min(b)$: 
  \[d(b) := \left(\min_{v_i \in b^\downarrow, i\leq k} w_i \right) - \min(b) \ .\]
  
  Once this difference is known for the root, we get the result of \textsc{MinPath}($v_k$) by adding the overall minimum to this difference.
  See \Cref{fig:query-case-left} and \Cref{fig:query-case-right} for an illustration for how $d(b)$ is computed bottom-up. There is an additional case symmetric to \Cref{fig:query-case-left}.

\subsubsection{Running time} The runtime depends on the way we decompose the tree $T$ into paths. Each path in $T$ is decomposed into at most $O(\log n)$ lists. As we can perform \textsc{MinPrefix} and \textsc{AddPrefix} on each list in the decomposition in $O(\log n)$ time, the overall time to perform \textsc{AddPath} and \textsc{MinPath} is $O(\log^2 n)$.

\begin{figure}[t!]
	
	\centering
	\begin{minipage}[t]{.48\linewidth}
		\centering
		\resizebox{.84\linewidth}{!} {
			
			\begin{tikzpicture}
			
			\GraphInit[vstyle=Dijkstra]
			\GraphInit[vstyle=Normal]
			\SetUpEdge[lw         = 1pt,
			color      = black,
			labelcolor = white,
			labeltext  = black,
			labelstyle = {left=2pt, yshift=0.3cm, fill=none}]

			\Vertex[x=0, y=2,]{b}	
			\Vertex[x=-1, y=1, Math]{l}
			\Vertex[x=1, y=1, Math]{r}
			
			\Edges[lw=0.5pt, label=$\phi_i(l)$](b, l)
			\Edges[lw=0.5pt, label=$\phi_i(r)$, labelstyle ={right =2pt, yshift=0.3cm, fill=none}](b, r)
			
			\node[isosceles triangle,shape border rotate=90, minimum height=20mm, minimum width=12mm, fill=black!5, draw] at (1,-0.8) {};
			\node[isosceles triangle,shape border rotate=90, minimum height=20mm, minimum width=12mm, fill=black!5, draw] at (-1,-0.8) {};
			
			\draw [->, thick] (1.5,-1.7) -- (1.5,-1.35);
			\node[draw=none,fill=none] at (1.3,-1.9) {$\min_{i}(b)$} ;
			
			\draw [->, thick] (0.5,-1.7) -- (0.5,-1.35);
			\node[draw=none,fill=none] at (0,-1.9) {$\min_{i-1}(b)$} ;
			
			\node[draw=none,fill=none] at (0,2.8) {$\phi_i(b) = \phi_i(r)$} ;

			\end{tikzpicture}
		}

		\caption{AddPrefix (A): In case the minimum stays in the right subtree after the $i$-th update, the change in minimum at node $b$ is given directly by the change of minimum in the right subtree.}
		\label{fig:addpath-same-tree}
		
	\end{minipage}%
	\hfill
	\begin{minipage}[t]{.48\linewidth}
		\centering
		\resizebox{.84\linewidth}{!} {
			
			\begin{tikzpicture}

			\GraphInit[vstyle=Dijkstra]
			\GraphInit[vstyle=Normal]
			\SetUpEdge[lw         = 1pt,
			color      = black,
			labelcolor = white,
			labeltext  = black,
			labelstyle = {left=2pt, yshift=0.3cm, fill=none}]

			\Vertex[x=0, y=2]{$b$}	
			\Vertex[x=-1, y=1]{$l$}
			\Vertex[x=1, y=1]{$r$}
			
			\Edges[lw=0.5pt, label={ $\phi_i(l)$}](b, l)
			\Edges[lw=0.5pt, label={ $\phi_i(r)$}, labelstyle = {right=2pt, yshift=0.3cm, fill=none}](b, r)
			
			\node[isosceles triangle,shape border rotate=90, minimum height=20mm, minimum width=12mm, fill=black!5, draw] at (1,-0.8) {};
			\node[isosceles triangle,shape border rotate=90, minimum height=20mm, minimum width=12mm, fill=black!5, draw] at (-1,-0.8) {};
			
			\draw [->, thick] (1,-1.7) -- (1,-1.35);
			\node[draw=none,fill=none] at (1,-1.9) {$\min_i(b)$} ;
			
			\draw [->, thick] (-1,-1.7) -- (-1,-1.35);
			\node[draw=none,fill=none] at (-1,-1.9) {$\min_{i-1}(b)$} ;
			
			\node[draw=none,fill=none] at (0,2.8) {$\phi_i(b) =\phi_i(r) + \Delta_{i-1}(b)$} ;

			\end{tikzpicture}
		}
		\vfill
		\caption{AddPrefix (B): In case the minimum was in the left subtree before the $i$-th subtree, and is in the right subtree after the $i$-th update, the difference between the minimum of the two subtrees before the $i$-th update needs to be taken into account.}
		\label{fig:addpath-different-tree}
	\end{minipage}
	
\end{figure}

\section{Parallel Minimum Path}\label{sec:parallel-minpath}

The monotone Minimum Path structure from \Cref{sec:monotone-minpaths} can execute Queries and Updates one-by-one. Directly trying to parallelize such an approach leads to problems of \emph{concurrency}: When many updates try to change the same memory location concurrently, these conflicts need to be resolved. Some general techniques to do so are known in practice, such as locks or lock-free methods~\cite{DBLP:books/daglib/0020056}. However, these approaches essentially serialize accesses to the same memory location. Thus, locations that are accessed by many updates, such as the root in the minimum path structure, become sequential bottlenecks.

Therefore, we take a different approach. 
We start with two observations implied by cache-oblivious algorithm~\cite{DBLP:conf/ciac/GeissmannG17}:
\begin{enumerate}
	\item[(1)] The complete sequence of updates and queries is known upfront: it is enough to find a parallel algorithm to perform a \emph{batch} of minimum path operations.
	\item[(2)] The data structure operations traverse the memory in a fixed order (namely walking bottom-up in some trees). This allows us to perform all of the updates at once, simulating the execution of all updates at the same time by logically sweeping the trees bottom-up and producing for each memory location all its intermediate states at once.
\end{enumerate}

In the cache-oblivious algorithm, the batch of updates is simulated using a priority queue. This is inherently sequential. Moreover, it leads to a runtime of $\Theta(\log ^ 3 n)$ per minimum path operation.

In Sections \ref{sec:parallel-addprefix} and \ref{sec:parallel-minprefix} we instead provide an explicit schedule for performing a batch of updates and queries \emph{in parallel}, such that the overall work per minimum path operation is $\Theta(\log ^2 n)$.

\subsubsection{Problem Statement}
We consider a \emph{batch} of $k$ \textsc{MinPath} and \textsc{AddPath} operations $o_1, \dotsc, o_k$. Each \textsc{AddPath} operation is of the form $o_i = (i, v(i), x(i))$ for a time $i$, a vertex $v(i)$, and a weight $x(i)$. Similarly, each \textsc{MinPath} operation $o_i$ is of the form $o_i = (i, v(i))$ for a time $i$ and a vertex $v(i)$. The goal is to compute the result of all \textsc{MinPath} operations at once in parallel, as if the operation sequence where executed sequentially.

As in the sequential case, we decompose the tree into a set of directed paths (See \Cref{sec:tree-decomposition})  and build a minimum prefix structure for each of those paths . We start with the presentation of \textit{parallel} \textsc{AddPrefix} and \textit{parallel} \textsc{MinPrefix}.

\begin{figure}[t]
	
	\centering
	\begin{minipage}[t]{.48\linewidth}
		\centering
		\resizebox{0.84\linewidth}{!} {
			
			\begin{tikzpicture}
			
			\GraphInit[vstyle=Dijkstra]
			\GraphInit[vstyle=Normal]
			\SetUpEdge[lw         = 1pt,
			color      = black,
			labelcolor = white,
			labeltext  = black,
			labelstyle = {left=2pt, yshift=0.3cm, fill=none}]

			\Vertex[x=0, y=2, Math]{b}	
			\Vertex[x=-1, y=1, Math]{l}
			\Vertex[x=1, y=1, Math]{r}
			
			\Edges[lw=0.5pt, label=$d(l)$](b, l)
			\Edges[lw=0.5pt, labelstyle = {right=15pt, above=-2pt, fill=none}](b, r)
			
			\draw [-, line width=1.5mm, color=bough_4] (-1.75,-1.15) -- (-1.15,-1.15);
			
			\node[isosceles triangle,shape border rotate=90, minimum height=18mm, minimum width=13mm, fill=black!5, draw] at (1,-0.65) {};
			\node[isosceles triangle,shape border rotate=90, minimum height=18mm, minimum width=13mm, fill=black!5, draw] at (-1, -0.65) {};
			
			\draw [->, thick] (-1.15,-1.5) -- (-1.15,-1.15);
			\node[circle, draw=black,fill=bough_4,inner sep=1pt] at (-1.15,-1.7) {$v_k$} ;
			
			\draw [->, thick] (-0.5,-1.5) -- (-0.5,-1.15);
			\node[draw=none,fill=none] at (-0.2,-1.7) {min(b)} ;
			
			\node[draw=none,fill=none] at (0,2.8) {$d(b) := d(l)$} ;
			
			\end{tikzpicture}
		}
		
		\caption{MinPrefix (A). The easy case for MinPrefix($v_k)$ is when both the query vertex $v_k$ and the smallest weight in $b$'s subtree $\min(b)$ are in the left subtree $l$. Then, $d(b)$ is copied from the left child $l$.}
		\label{fig:query-case-left}
	\end{minipage}%
	\hfill
	\begin{minipage}[t]{.48\linewidth}
		\centering
		\resizebox{0.84\linewidth}{!} {

			\begin{tikzpicture}

			\GraphInit[vstyle=Dijkstra]
			\GraphInit[vstyle=Normal]
			\SetUpEdge[lw         = 1pt,
			color      = black,
			labelcolor = white,
			labeltext  = black,
			labelstyle = {left=2pt, yshift=0.3cm, fill=none}]
			
			\Vertex[x=0, y=2, Math]{b}	
			\Vertex[x=-1, y=1, Math]{l}
			\Vertex[x=1, y=1, Math]{r}
			
			\Edges[lw=0.5pt, label=$d(l)$](b, l)
			\Edges[lw=0.5pt, labelstyle = {right=15pt, above=-2pt, fill=none}](b, r)
			
			\draw [-, line width=1.5mm, color=bough_4] (-1.75,-1.15) -- (-0.75,-1.15);
			\draw [-, dotted, line width=1.5mm, color=bough_4] (-0.75,-1.15) -- (-0.25,-1.15);
			\draw [-, dotted, line width=1.5mm, color=bough_4] (0.25,-1.15) -- (1.15,-1.15);
			
			\node[isosceles triangle,shape border rotate=90, minimum height=18mm, minimum width=13mm, fill=black!5, draw] at (1,-0.65) {};
			\node[isosceles triangle,shape border rotate=90, minimum height=18mm, minimum width=13mm, fill=black!5, draw] at (-1, -0.65) {};
			
			\draw [->, thick] (-1.15,-1.5) -- (-1.15,-1.15);
			\node[circle, draw=black,fill=bough_4, align=center,inner sep=1pt] at (-1.15,-1.7) {$w^{\star}$} ;
			
			\draw [->, thick] (1.4,-1.5) -- (1.4,-1.15);
			\node[draw=none,fill=none] at (1.3,-1.7) {min(b)} ;
			
			\node[draw=none,fill=none] at (0,2.8) {$d(b) := d(l) - \Delta(b)$} ;
			
			\end{tikzpicture}
		}
		
		\caption{MinPrefix (B): The smallest weight in the query prefix $w^{\star}=\min_{v_i \in b^\downarrow, i\leq k} w_i$ is in the left subtree $l$, but the smallest weight in $b$'s subtree $\min(b)$ is in the right subtree $r$. The difference $\Delta(b)$ between the minimum in the right subtree and the minimum in the left subtree needs to be taken into account. }
		\label{fig:query-case-right}
	\end{minipage}
	
\end{figure}

\subsection{Parallel AddPrefix}\label{sec:parallel-addprefix}

We consider a batch of \textsc{AddPrefix} operations for a list of length $n$, where each operation $o_i$ is of the form $o_i = (i, v(i), x(i))$ for a time $i$, a vertex $v(i)$, and a weight $x(i)$. Conceptually, we build the same binary tree $B$ on top of the list as in the sequential case (see \Cref{sec:monotone-minpaths}). 

In order to allow for future queries, we need to produce all intermediate states of a particular node that arose when the updates were executed sequentially.
From inspecting the update equation for a particular node $b$ with left child $l$ and right child $r$,
\begin{align*}
\Delta_{i}(b) &= \Delta_{i-1}(b) + \phi_i(r)-\phi_i(l) \ ,
\intertext{
we make the following key observation that it telescopes to} \Delta_i(b) &= \Delta_0(b) + \left(\sum_{j=1}^{i} \phi_j (r) \right)- \left(\sum_{j=1}^{i} \phi_j (l) \right) \enspace .
\end{align*}

Therefore, given the all-prefix-sums for $\phi_{1}(r), \dotsc, \phi_k(r)$ and for $\phi_{1}(l), \dotsc , \phi_k(l)$, we can compute the values of $\Delta_1 (b), \dotsc, \Delta_k(b)$ in $O(1)$ depth and $O(k)$ work. Moreover, given those values, the values $\phi_1(b), \dotsc, \phi_k(b)$ each only depend on a constant number of already computed terms. Therefore, they can also be computed in $O(1)$ depth and $O(k)$ work. Unfortunately, this naive interpretation still needs far too many processors: $\Omega(k)$ work at every node $b$, and hence $\Omega(nk)$ work overall.

Fortunately, not every update is relevant to every node: 
$\Delta_i(b)$ only changes for an update $o_i=(i, v(i), x(i))$ if $v(i)$ is a descendant of $b$. We can therefore restrict our attention to the times $i$ of those operations:
Let $H(b)$ be the union over all times $i$ where the update is such that $v(i)$ is a descendant of $b$.

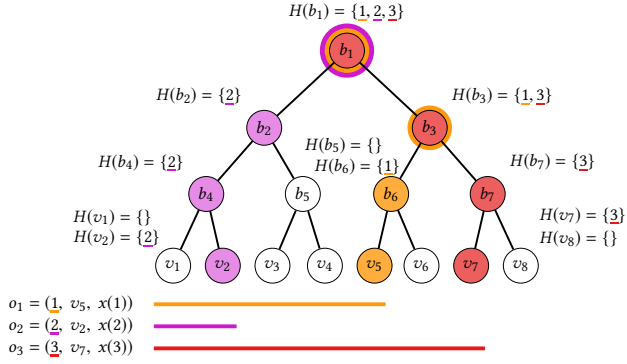
\begin{figure}[t]
	\centering
	\resizebox{1\linewidth}{!} {
		\begin{tikzpicture}[node distance = 0.1cm]
		\GraphInit[vstyle=Dijkstra]
		\SetUpEdge[lw         = 1pt,
		color      = black,
		labelcolor = white,
		labeltext  = black,
		labelstyle = {sloped}]

		\draw [-, line width=0.7mm, color=bough_4] (-3.5,-1.8) -- (0.7,-1.8);
		\draw [-, line width=0.7mm, color=bough_5] (-3.5,-2.2) -- (-2,-2.2);
		\draw [-, line width=0.7mm, color=bough_1] (-3.5,-2.6) -- (2.5,-2.6);
		
		\node[draw=none,fill=none] at (-5,-1.8) {$o_1 = ( \mathunderline{bough_4}{1},\ v_5,\ x(1)  )$} ;
		\node[draw=none,fill=none] at (-5,-2.2) {$o_2 = ( \mathunderline{bough_5}{2},\ v_2,\ x(2)  )$} ;
		\node[draw=none,fill=none] at (-5,-2.6) {$o_3 = ( \mathunderline{bough_1}{3},\ v_7,\ x(3)  )$} ;

		\Vertex[x=-1.35, y=-1.1, Math]{v_3}
		\Vertex[x=-0.4, y=-1.1, Math]{v_4}
		
		\Vertex[x=1.35, y=-1.1, Math]{v_6}
		
		\Vertex[x=3.15, y=-1.1, Math]{v_8}

		\begin{scope}
		\SetVertexNormal[FillColor  = pastel_5, LineColor=bough_5, LineWidth=3.5mm]
		\Vertex[x=0, y=2.8, Math]{}
		\end{scope}
		
		\begin{scope}
		\SetVertexNormal[FillColor  = pastel_4, LineColor=bough_4, LineWidth=1.75mm]
		\Vertex[x=1.5, y=1.4,Math]{}
		\Vertex[x=0, y=2.8, Math]{}
		\end{scope}
		
		\begin{scope}
		\SetVertexNormal[FillColor  = pastel_1]
		\Vertex[x=2.25, y=-1.1, Math]{v_7}
		\Vertex[x=2.55, y=0.2, Math]{b_7}
		\Vertex[x=1.5, y=1.4, Math]{b_3}
		\Vertex[x=0, y=2.8, Math]{b_1}
		\end{scope}

		\begin{scope}
		\SetVertexNormal[FillColor  = pastel_4]
		\Vertex[x=0.5, y=-1.1, Math]{v_5}
		\Vertex[x=0.8, y=0.2, Math]{b_6}
		\end{scope}

		\begin{scope}
		\SetVertexNormal[FillColor  = pastel_5]
		\Vertex[x=-2.25, y=-1.1, Math]{v_2}
		\Vertex[x=-2.55, y=0.2, Math]{b_4}
		\Vertex[x=-1.5, y=1.4, Math]{b_2}
		\end{scope}

		\Vertex[x=-3.15, y=-1.1, Math]{v_1}
		
		\Vertex[x=-0.8, y=0.2, Math]{b_5}

		\node[draw=none,fill=none, above=of b_1] {$H(b_1) = \{\mathunderline{bough_4}{1}, \mathunderline{bough_5}{2} , \mathunderline{bough_1}{3}\}$} ;
		\node[draw=none,fill=none, above left =of b_2] {$H(b_2) = \{ \mathunderline{bough_5}{2} \}$} ;
		\node[draw=none,fill=none, above left =of b_4] {$H(b_4) = \{ \mathunderline{bough_5}{2} \}$} ;
		\node[draw=none,fill=none, above right =of b_3] {$H(b_3) = \{ \mathunderline{bough_4}{1} , \mathunderline{bough_1}{3}\}$} ;
		\node[draw=none,fill=none, above right =of b_7] {$H(b_7) = \{ \mathunderline{bough_1}{3} \}$} ;
		
		\node[draw=none,fill=none, above right = 0 and 0 of v_8] {$H(v_8) = \{\}$} ;
		\node[draw=none,fill=none, above right = 0.4 and 0 of v_8] {$H(v_7) = \{ \mathunderline{bough_1}{3} \}$} ;
		
		\node[draw=none,fill=none, above left = 0.4 and 0.95 of v_2] {$H(v_1) = \{\}$} ;
		\node[draw=none,fill=none, above left = 0 and 0.8 of v_2] {$H(v_2) = \{ \mathunderline{bough_5}{2} \}$} ;
		
		\node[draw=none,fill=none, above left = 0  and -0.5 of b_6] {$H(b_6) = \{ \mathunderline{bough_4}{1} \}$} ;
		\node[draw=none,fill=none, above right = 0.4 and -0.3 of b_5] {$H(b_5) = \{\}$} ;

		\Edges[](b_1, b_2)
		\Edges[](b_1, b_3)
		\Edges[](b_4, b_2, b_5)
		\Edges(b_3, b_6)
		\Edges(b_3, b_7)
		
		\Edges(v_1, b_4, v_2)
		\Edges(v_3, b_5, v_4)
		\Edges(v_5, b_6, v_6)
		\Edges(v_7, b_7, v_8)
		
		\end{tikzpicture}
	}
	
	\caption{Example illustrating the definition of $H$. The set $H(b_i)$ keeps track of the indexes of the updates that are relevant at node $b_i$. These are exactly those that update a prefix that ends in a descendant of $b_i$. The Figure indicates the value of $H$ for all inner nodes and the leafs $v_1, v_2$, $v_7$ and $v_8$.}
	
	\label{fig:time-indexes}
	
\end{figure}

We continue with three important observations.

\begin{observation}\label{obs:one}
	$H(l)$ and $H(r)$ are disjoint, and $H(b)=H(l)\cup H(r)$. This implies that we can \emph{merge} the updates relevant to the children to obtain the updates relevant to $b$. Note that an update is relevant to $\log n + 1$ nodes of the tree, namely those along the path from the root of $B$ towards the last node in the list prefix that changes.
	See \Cref{fig:time-indexes} for an example.
\end{observation}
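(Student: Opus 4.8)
The plan is to reduce all three assertions to one structural fact about the auxiliary binary tree $B$: it is a complete binary tree whose leaves, read left to right, are exactly $v_1, \dots, v_n$. From this, for every interior node $b$ of $B$ with left child $l$ and right child $r$ we have $b^\downarrow = \{b\} \cup l^\downarrow \cup r^\downarrow$ with $l^\downarrow \cap r^\downarrow = \emptyset$; in particular, since $b$ is not a leaf, the leaf-descendants of $b$ are partitioned into the leaf-descendants of $l$ and those of $r$. Recall also that, by definition, $i \in H(b)$ holds exactly when the leaf $v(i)$ lies in $b^\downarrow$.

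I would then carry out the following steps. First, \emph{disjointness}: if a time $i$ were in $H(l) \cap H(r)$, the leaf $v(i)$ would be a descendant of both $l$ and $r$, contradicting $l^\downarrow \cap r^\downarrow = \emptyset$. Second, the \emph{union identity}: since $v(i)$ is a leaf, $v(i) \in b^\downarrow$ is equivalent to $v(i) \in l^\downarrow \cup r^\downarrow$, i.e. to $i \in H(l)$ or $i \in H(r)$; together with the first step this gives $H(b) = H(l) \uplus H(r)$, which is precisely the claim that the time-sorted lists of relevant updates at the two children can be \emph{merged} to obtain the one at $b$. Third, the \emph{count}: fix an update $o_i$; the nodes $b$ with $i \in H(b)$ are exactly the nodes $b$ with $v(i) \in b^\downarrow$, i.e. the ancestors of the leaf $v(i)$ in $B$, which are the $\log n + 1$ nodes on the root-to-$v(i)$ path, and $v(i)$ is by definition the last vertex of the updated prefix $v_1, \dots, v(i)$.

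I do not expect a real obstacle here: this is a bookkeeping lemma, and each step is a one-line consequence of the tree structure together with the definition of $H$. The only point needing a little care is the meaning of ``complete binary tree on $n$ leaves'' when $n$ is not a power of two — one pads the list to the next power of two (or uses the standard balanced tree of height $\lceil \log n \rceil$), so that every leaf has depth $\Theta(\log n)$; then the depth bound, and hence the estimate $\sum_b |H(b)| = \sum_{i=1}^{k} |\{\,b : i \in H(b)\,\}| = O(k\log n)$ that the later work analysis of parallel \textsc{AddPrefix} relies on, holds with the floor/ceiling absorbed into the $O(\cdot)$.
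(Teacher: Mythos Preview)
Your argument is correct and is exactly the natural unpacking of the definitions; the paper itself does not give a proof of this observation at all, merely stating it and pointing to \Cref{fig:time-indexes}, so your write-up is a faithful expansion of what the paper leaves implicit. The only cosmetic remark is that the paper phrases the count as ``$\log n + 1$'' without addressing the non-power-of-two issue you raise; your comment about padding (or a height-$\lceil \log n\rceil$ tree) is the right fix and is harmless for the later $O(k\log n)$ bound.
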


\begin{observation}\label{obs:two}
	The \Cref{obs:one} allows us leave out all the indices that do not occur in $H(b)$, as they are not relevant to $b$: 
	\[\Delta_i(b) = \Delta_0(b) + \left(\sum_{j\in H(b)} \phi_j (r) \right)- \left(\sum_{j\in H(b)} \phi_j (l) \right) \, , \text{ for any $i\in H(b)$.} \]
\end{observation}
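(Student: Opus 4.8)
The plan is to reduce \Cref{obs:two} to the single claim that $\phi_j(r)=\phi_j(l)$ for every index $j\notin H(b)$, where $l$ and $r$ are the left and right child of $b$. Granting this, I substitute into the telescoped identity $\Delta_i(b)=\Delta_0(b)+\sum_{j=1}^{i}\bigl(\phi_j(r)-\phi_j(l)\bigr)$ derived above: all summands with $j\notin H(b)$ vanish, so the surviving ones are exactly those with $j\in H(b)$ and $j\le i$, i.e.
\[\Delta_i(b)=\Delta_0(b)+\Bigl(\sum_{j\in H(b),\,j\le i}\phi_j(r)\Bigr)-\Bigl(\sum_{j\in H(b),\,j\le i}\phi_j(l)\Bigr)\,,\]
which is the asserted formula, the sums $\sum_{j\in H(b)}$ being read as prefix sums over the relevant times up to $i$. \Cref{obs:one} is then what makes the index set $H(b)$ itself cheap to assemble, since $H(b)=H(l)\cup H(r)$.

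To prove the claim, recall that $B$ is the complete binary tree whose leaves in left-to-right order are $v_1,\dots,v_n$, and that the $j$-th update $o_j=(j,v(j),x(j))$ adds $x(j)$ to the weights of the prefix $v_1,\dots,v_{v(j)}$, i.e.\ to the $v(j)$ leftmost leaves of $B$. For any node $c$, its set of descendant leaves $c^{\downarrow}$ is a contiguous block of the list, and $l^{\downarrow},r^{\downarrow}$ partition $b^{\downarrow}$; moreover, adding a constant to every leaf of $c^{\downarrow}$ increases $\min(c)$ by that constant, while touching only leaves outside $c^{\downarrow}$ leaves $\min(c)$ unchanged (the minimum is translation-equivariant). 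Fix $j\notin H(b)$ and compare the updated prefix with $b^{\downarrow}$, which must fall into one of three cases. \textbf{(i)} The prefix meets none of $b^{\downarrow}$: then it meets neither $l^{\downarrow}$ nor $r^{\downarrow}$, so $\phi_j(l)=\phi_j(r)=0$. \textbf{(ii)} The prefix contains all of $b^{\downarrow}$: were the prefix equal to $b^{\downarrow}$, its last leaf $v(j)$ would lie in $b^{\downarrow}$ and hence $j\in H(b)$, contradicting the choice of $j$; so the prefix strictly contains $b^{\downarrow}$, thus contains all of $l^{\downarrow}$ and all of $r^{\downarrow}$, giving $\phi_j(l)=\phi_j(r)=x(j)$. \textbf{(iii)} The prefix contains some but not all of $b^{\downarrow}$: then its last leaf $v(j)$ lies in the contiguous block $b^{\downarrow}$, so $j\in H(b)$, again contradicting the choice of $j$; this case does not occur. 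In cases (i) and (ii) we obtain $\phi_j(r)-\phi_j(l)=0$, which is the claim.

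The computation is elementary; the one point needing care is the bookkeeping at the boundary of cases (ii) and (iii) — that a prefix overlapping $b^{\downarrow}$ either exactly or partially must end inside $b^{\downarrow}$ and therefore records its index in $H(b)$ — and this is immediate because $b^{\downarrow}$ is a contiguous block and the prefix is an initial segment of the list. So no genuine obstacle remains beyond this case analysis together with translation-equivariance of the minimum.
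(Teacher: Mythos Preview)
Your argument is correct and matches the paper's implicit reasoning. The paper does not actually prove \Cref{obs:two}; it states it as an observation, relying on the fact already noted in \Cref{sec:seqAddPrefix} that $\phi_j(c)$ is trivial (equal to $0$ or to $x(j)$) whenever the updated leaf $v(j)$ is not a descendant of $c$. Your case analysis spells this out precisely and also correctly flags that the displayed sums $\sum_{j\in H(b)}$ must be read as prefix sums over $\{j\in H(b):j\le i\}$, which is exactly how the paper uses the observation when it invokes all-prefix-sums to build $\Delta(b)$.
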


\begin{observation}\label{obs:three}
In a bottom up traversal of the tree, we get $\phi_i (l)$ for all $i\in H(l)$ and $\phi_i (r)$ for all $i\in H(r)$. This means that we have some "missing" values for the sums in \Cref{obs:two}, namely the values of $\phi_i(r)$ for all $i \in H(l)$ and the values of $\phi_i(l)$ for all $i\in H(r)$. 
As explained for the sequential operation (see \Cref{sec:seqAddPrefix}), these values are trivial to get:
\begin{itemize}
	\item Updates $(i, v(i), x(i))$ with $i\in H(l)$ do not affect the descendants of $r$. Hence, for such an $i\in H(l)$ we have $\phi_i(r)=0$.
	\item Updates $(i, v(i), x(i))$ with $i\in H(r)$ change all descendants of $l$ by the same amount: for such an $i\in H(r)$ we have $\phi_i(l)=x(i)$. 
\end{itemize}
\end{observation}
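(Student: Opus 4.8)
# Proof Proposal for Observation 3

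The plan is to verify the two bullet points by unwinding the definitions of $H(l)$, $H(r)$, and $\phi_i$, relying on the structure of the binary tree $B$ and the fact that each \textsc{AddPrefix}$(v(i), x(i))$ operation updates exactly a prefix of the list, i.e., a set of consecutive leaves $v_1, \dotsc, v_{m(i)}$ where $v_{m(i)} = v(i)$.

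First, recall the key structural property of the complete binary tree $B$: for any node $b$ with left child $l$ and right child $r$, the leaves in $l^{\downarrow}$ all precede (in list order) the leaves in $r^{\downarrow}$, and together $l^{\downarrow} \cup r^{\downarrow}$ is exactly the contiguous block of leaves in $b^{\downarrow}$. I would state this as a preliminary fact about complete binary trees built over an ordered list. Then, for the first bullet: take $i \in H(l)$, so by definition $v(i) \in l^{\downarrow}$, meaning the updated prefix $\{v_1, \dotsc, v(i)\}$ ends strictly before all leaves of $r^{\downarrow}$. Hence no leaf descendant of $r$ lies in the updated prefix, so the weight of every descendant leaf of $r$ is unchanged by the $i$-th update. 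Since $\min_i(r)$ is the minimum over exactly those leaves, $\min_i(r) = \min_{i-1}(r)$, i.e., $\phi_i(r) = \min_i(r) - \min_{i-1}(r) = 0$. This matches the ``no descendant leaves updated'' case of \Cref{sec:seqAddPrefix}.

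For the second bullet: take $i \in H(r)$, so $v(i) \in r^{\downarrow}$. Since the updated prefix $\{v_1, \dotsc, v(i)\}$ contains $v(i)$, and every leaf of $l^{\downarrow}$ precedes $v(i)$ in list order (by the structural fact, as $l$ is the left sibling), every leaf descendant of $l$ lies in the updated prefix. Therefore the $i$-th update adds exactly $x(i)$ to the weight of every descendant leaf of $l$, which shifts their minimum by exactly $x(i)$: $\min_i(l) = \min_{i-1}(l) + x(i)$, so $\phi_i(l) = x(i)$. This is the ``all descendant leaves updated'' case of \Cref{sec:seqAddPrefix}. Combining both bullets, the sums in \Cref{obs:two} restricted to $H(l)$ and $H(r)$ respectively have all their ``missing'' terms determined in $O(1)$ work each, which is the point of the observation.

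The main obstacle — really the only subtlety — is making the ``precedes in list order'' argument fully rigorous: one must be careful that $H(l)$ and $H(r)$ are defined via descendant leaves of $l$ and $r$, and that the prefix updated by operation $o_i$ is determined by the position of $v(i)$ in the list, not by which subtree its updating path happens to pass through. Once the ordering invariant of the complete binary tree $B$ over the list is pinned down, both claims follow immediately, and no nontrivial calculation is needed. I would therefore front-load a one-sentence lemma on that invariant and then dispatch the two cases in two short paragraphs.
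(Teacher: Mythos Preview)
Your proposal is correct and follows essentially the same reasoning as the paper: the observation in the paper is stated with its justification inline (the two bullet points, referring back to \Cref{sec:seqAddPrefix}), and you have simply made that justification fully explicit by pinning down the left-before-right ordering invariant of the leaves in $B$ and then reading off the two cases. There is no separate proof in the paper beyond what is contained in the observation itself, so your expanded argument is exactly the intended one.
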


The above Observations \ref{obs:one}-\ref{obs:three} suggest a parallel bottom-up procedure that produces for each node $b$:
\begin{itemize}
	\item The values of $H(b)$ as an array in sorted order. In the following the $i$-th largest value in $H(b)$ is denoted $H(b)[i]$. This means that $H(b)[i]$ is the index (w.r.t. the batch of updates) of the $i$-th update relevant at node $b$.
	
	\item An array $X(b)$ stores the increment of the updates relevant at $b$:
	\[X(b)[i] = x(j) \ \ \text{, where } j=H(b)[i] \ \ .\]
	This means that the $i$-th update relevant at $b$ is of the form \textsc{AddPath}($v(j)$, $X(b)[i]$) for some descendant $v(j)$ of $b$.
	
	\item An array $\Phi(b)$ storing $\phi_j(b)$ for the updates relevant at $b$. Specifically, we have \[\Phi(b)[i] = \phi_{j} (b) \ \ \text{, where } j=H(b)[i] \ \ .\]
\end{itemize}
Moreover, the procedure produces for each interior node $b$ the array $\Delta(b)$ containing the  intermediate states of the data structure at all relevant times: \[\Delta(b)[i]= \ \Delta_{j}(b) \ \ \text{, where } j=H(b)[i] \ \ .\]

In the following, we explain how to obtain these values for leafs, inner nodes, and the root of the binary tree $B$.

	\subsubsection{Leafs}
	At the leafs, we just need to group the updates by vertex and keep track of the relevant quantities. First, apply a stable sort by vertex to the operation sequence. Now, the operation sequence is sorted by vertex and tuples with the same vertex are sorted by time. From this, we can easily find which operations belong to a given vertex $v$ (by a binary search for $v$ and its successor). Then, for each vertex $v$ initialize $H(v)$ as the times of the tuples that have $v$ as a vertex and initialize $X(v)$ with the corresponding increments. Set $\Phi(v)$ to $X(v)$ - recall that $\Phi(v)$ records how much the minimum in the subtree changed for every operation relevant at $v$ and at the leafs the minimum changes by the increment.

	\subsubsection{Inner Nodes}
	 At an inner node $b$ with left child $l$ and right child $r$, we merge the results from its children, use prefix sums to generate $\Delta(b)$ in parallel, and construct $\Phi(b)$ based on $\Delta(b)$.

	Using \Cref{obs:one}, we merge the two sorted arrays $H(l)$ and $H(r)$ in parallel to receive $H(b)$ in sorted order. Similarly, we merge the update increments $X(l)$ and $X(r)$ to obtain $X(b)$ in sorted order.

	Using \Cref{obs:three}, we reconstruct the missing values for the right child $r$ and merge them with $\Phi(r)$ that we got from the right child. Proceed similarly for the left child $l$. Now, for all times $i$ relevant to $b$ (i.e. for all $i\in H(b)$), we have $\phi_i(l)$ and $\phi_i(r)$ each in an array sorted by increasing time $i$.

	Using \Cref{obs:two}, we construct $\Delta(b)$ as follows. We compute (in parallel) the all-prefix-sums over the array that contains the $\phi_i(l)$ for all $i$ in $H(b)$, and the array that contains the $\phi_i(r)$ for all $i$ in $H(b)$. Then, the observation immediately implies $\Delta_i(b)$ for all $i$ in $H(b)$ in parallel.
	
	Finally, we compute each entry of $\Phi(b)$ in parallel:
\begin{align*}
\Phi(b)[i] &=
	\begin{cases}
	\Phi(l)[i] \hfill & \text{\small if  $\Delta(b)[i-1] > 0$ and $\Delta(b)[i] > 0$}\enspace , \\
	\Phi(l)[i] - \Delta(b)[i-1] \hfill & \text{\small if $\Delta(b)[i-1] \leq 0$ and $\Delta(b)[i] > 0$}\enspace , \\
	\Phi(r)[i] \hfill & \text{\small if  $\Delta(b)[i-1] \leq 0$ and $\Delta(b)[i] \leq  0$}\enspace , \\
	\Phi(r)[i] + \Delta(b)[i-1] \hfill & \text{\small if $\Delta(b)[i-1] > 0$ and $\Delta(b)[i] \leq 0$} \enspace .
\end{cases}
\end{align*}

\subsubsection{Root} 
For the root $\rho$ of the binary tree, we proceed as for an internal node. Additionally, we generate the overall minimum weight after every update, which will be needed for the parallel \textsc{MinPath} queries. Observe that 
\begin{align*}
	\text{min}_i(\rho) = \text{min}_0(\rho) + \sum_{j=1}^{i} \phi_j(\rho) \enspace .
\end{align*}
Hence, we compute all the values $\text{min}_1(\rho), \dotsc, \text{min}_k(\rho)$ by a parallel all-prefix-sums computation on $\phi_1(\rho), \dotsc, \phi_k(\rho)$.

\subsubsection{Running Time}
By the proceeding described above, we obtain the following bounds for parallel \textsc{AddPrefix} operations.
\begin{lemma}
	Performing a batch of $k$ parallel \textsc{AddPrefix} operations on a list of length $n$ takes $O( k (\log n + \log k) + n)$ work and depth $O(\log n \log k)$ 
\end{lemma}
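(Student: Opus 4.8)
The plan is to bound the work and depth of the bottom-up procedure described above by summing the cost at the leaves, the inner nodes, and the root, using the key structural fact from \Cref{obs:one} that each update is relevant to exactly $\log n + 1$ nodes of the binary tree $B$.

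\textbf{Setup and the charging argument.} First I would fix the binary tree $B$ of depth $O(\log n)$ built on top of the list. Let $k_b := |H(b)|$ denote the number of updates relevant at node $b$. The crucial accounting fact is that $\sum_{b \in B} k_b = O(k \log n)$: by \Cref{obs:one}, an update with vertex $v(i)$ contributes to $H(b)$ exactly for the $\log n + 1$ nodes $b$ on the root-to-$v(i)$ path, so summing over all $k$ updates gives $k(\log n + 1)$. This is the single inequality that makes the whole bound work, so I would state it first and refer back to it.

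\textbf{Cost at the leaves.} The initialization at the leaves does one stable sort of the length-$k$ operation sequence by vertex (using a parallel integer/comparison sort), which costs $O(k \log k)$ work and $O(\log k)$ depth; then for each vertex a binary search to locate its block costs $O(\log k)$ depth and $O(k)$ total work; finally copying the times and increments into $H(v), X(v), \Phi(v)$ costs $O(k)$ work in $O(1)$ depth. Note $n$ also enters because we must touch all $n$ leaves (even empty ones), contributing the $+n$ work term.

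\textbf{Cost at the inner nodes and the root.} At an inner node $b$ with children $l, r$, the work consists of: (i) merging the sorted arrays $H(l), H(r)$ (and $X(l), X(r)$) into $H(b)$ (and $X(b)$), which by parallel merging is $O(k_l + k_r) = O(k_b)$ work and $O(\log k_b) = O(\log k)$ depth; (ii) reconstructing the missing $\phi$-values via \Cref{obs:three} and merging them into the $\phi_i(l), \phi_i(r)$ arrays indexed over $H(b)$, again $O(k_b)$ work and $O(\log k)$ depth; (iii) two all-prefix-sums computations over length-$k_b$ arrays to form $\Delta(b)$ by \Cref{obs:two}, costing $O(k_b)$ work and $O(\log k_b) = O(\log k)$ depth; (iv) the case analysis for $\Phi(b)$, which is $O(1)$ work per entry and hence $O(k_b)$ work in $O(1)$ depth. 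So each inner node costs $O(k_b)$ work and $O(\log k)$ depth, and the root additionally does one length-$k$ prefix sum for $\min_i(\rho)$, absorbed into the same bounds. Summing the work over all nodes gives $\sum_b O(k_b) = O(k \log n)$ by the charging argument; adding the leaf sorting term $O(k \log k)$ and the $O(n)$ term for touching all leaves yields total work $O(k(\log n + \log k) + n)$. For depth, the computation proceeds level by level up $B$; there are $O(\log n)$ levels and each level contributes $O(\log k)$ depth (plus the one-time $O(\log k)$ for the leaf sort), so the total depth is $O(\log n \log k)$.

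\textbf{Main obstacle.} The routine parts are the prefix-sum and merge primitives; the subtle point I would be most careful about is that the arrays $H(l)$ and $H(r)$ are \emph{disjoint} as subsets of $\{1,\dots,k\}$ but $H(b)$ must be produced in sorted-by-time order so that the subsequent prefix sums telescope correctly (\Cref{obs:two} requires summing $\phi_j$ in increasing time order), and that reconstructing the missing values of $\phi_i(r)$ for $i \in H(l)$ (which are $0$) and $\phi_i(l)$ for $i \in H(r)$ (which equal $x(i)$) must be interleaved into exactly the right positions — this is where a careless merge would break the telescoping identity. I would therefore emphasize that all merges are done with respect to the time coordinate and that disjointness (\Cref{obs:one}) guarantees the merge is well-defined with no ties to break.
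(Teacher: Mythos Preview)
Your proof is correct and follows essentially the same approach as the paper: bound the per-node cost by $O(|H(b)|+1)$ via parallel merge and prefix-sum primitives, then aggregate using the fact that each update is relevant to $O(\log n)$ nodes. The only cosmetic difference is that the paper organizes the sum level by level (observing $\sum_{b \text{ at level } i} |H(b)| = k$, hence $O(k+2^i)$ work per level) whereas you sum $\sum_b |H(b)| = O(k\log n)$ globally; these are the same double-counting argument.
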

\begin{proof}
	At the leafs, we sort the operation sequence and perform $O(k)$ parallel binary searches. This takes $O( k \log k)$ work and $O(\log k)$ depth.
At every inner node $b$, we perform a constant number of parallel all-prefix-sums operations and parallel merge operations on arrays of size $O(H(b))$. By the bounds of merging sorted arrays~\cite{DBLP:journals/siamcomp/Cole88} and parallel prefix sums~\cite{Reif:1993:SPA:562546}, the work at an inner node $b$ is $O(H(b) + 1)$ and the depth is $O(\log (H(b)) + 1)$.

All nodes with the same distance to the root can be processed in parallel. Thus, the total work arising from nodes at distance $i$ to the root is $O(k + 2^i)$ and the depth is $O(\log k)$. Here, we used that summing $H(b)$ over all nodes at a fixed distance $i$ gives exactly $k$, because each leaf is the descendant of exactly one node at distance $i$.
Nodes with different distance from the root need to be processed by decreasing distance (bottom up). Hence, the overall work arising at inner nodes is $O(k \log n + n)$ and the overall depth is $O(\log n \log k)$.

At the root, we perform an additional parallel all-prefix-sums operation on an array of size $k$.
\end{proof}

\subsection{Parallel MinPrefix} \label{sec:parallel-minprefix}

The parallel update algorithm (batch of \textsc{AddPath} operations) produces all intermediate states for the data structure. If we store for each node in the data structure all the state it ever has (sorted by time), the value of a cell after the $i$-th update can be determined by doing a binary search on those states, taking $\Theta(\log k)$ time. Each query is then performed independently. Overall, this takes $\Theta( k \log n \log k)$ work and $\Theta(\log k \log n)$ depth, which would be $\Omega(\log k)$ more work compared to the original data structure.

To get rid of this logarithmic factor in work, we also perform the queries in a batch and we use parallel merging to avoid the binary searches. We obtain the following bounds:
\begin{lemma}\label{lem:parallel-minimum-path}
	Performing a batch of $k$ \textsc{MinPrefix} and \textsc{AddPrefix} operations on a list of length $n$ takes $O( k (\log n + \log k) + n)$ work and depth $O(\log n \log k)$.
\end{lemma}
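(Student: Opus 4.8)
The plan is to mirror the structure of the parallel \textsc{AddPrefix} analysis and then bolt on a batched query phase that reuses the arrays $H(b)$, $\Delta(b)$, and $\min_i(\rho)$ already produced by parallel \textsc{AddPrefix}. Concretely, I first run parallel \textsc{AddPrefix} on the whole operation sequence (treating \textsc{MinPrefix} operations as no-ops for the update semantics, but keeping their time-stamps), which by the previous lemma costs $O(k(\log n + \log k) + n)$ work and $O(\log n \log k)$ depth and leaves, at every node $b$, the sorted array $H(b)$ of relevant update times together with the intermediate values $\Delta(b)[\cdot]$, and at the root the sorted array of all prefix-minima $\min_1(\rho),\dotsc,\min_k(\rho)$.

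Next I would handle the queries in a batch with a second, \emph{top-down} sweep that is the query analogue of \Cref{fig:query-case-left} and \Cref{fig:query-case-right}. For each query $o_i=(i,v(i))$ I need, at each node $b$ on the root-to-$v(i)$ path, the difference $d(b)$ between the prefix-restricted subtree minimum and the true subtree minimum $\min(b)$, using the \emph{state of the data structure at time $i$}. The recurrence for $d(b)$ in terms of $d(\text{child})$ and $\Delta(b)$ is exactly the sequential one; the only new ingredient is that each query must read $\Delta_i(b)$ for its own time $i$. Rather than doing a binary search per query per node (which would cost the extra $\log k$ factor), I group the queries that touch a given node $b$ the same way updates are grouped: at node $b$, the queries relevant to $b$ are exactly those $o_i$ whose query vertex lies in $b^\downarrow$, and by an observation identical to \Cref{obs:one} these sets split disjointly over the two children and can be maintained by parallel merging. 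Then, since both the relevant query times and the array $H(b)$ of update times are sorted, a single parallel merge aligns each relevant query time $i$ with its predecessor in $H(b)$, so every query simultaneously learns the correct $\Delta_i(b)$ (and, at the root, the correct $\min_i(\rho)$) in $O(\log(|H(b)|+q_b)+1)$ depth and $O(|H(b)|+q_b+1)$ work, where $q_b$ is the number of queries relevant at $b$. Charging as in the \textsc{AddPrefix} proof — the $H(b)$-sizes at a fixed depth sum to $k$, and likewise the $q_b$'s sum to $k$ — the per-level work is $O(k+2^i)$ and depth $O(\log k)$, so summing over $O(\log n)$ levels (now top-down instead of bottom-up) gives $O(k\log n + n)$ work and $O(\log n \log k)$ depth for the query sweep. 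Finally, reassembling a single \textsc{MinPath} answer from the at-most-$O(\log n)$ \textsc{MinPrefix} answers on the path-decomposition lists, and converting each $d(\rho)$ to an actual weight by adding the matched $\min_i(\rho)$, is a constant-depth, $O(k\log n)$-work postprocessing step (a sort by original operation index plus a segmented minimum).

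Combining the two sweeps, the total work is $O(k(\log n+\log k)+n)$ — dominated by the leaf sort plus the $O(k\log n+n)$ from the $O(\log n)$ levels of merges/prefix-sums — and the total depth is $O(\log n\log k)$, which is the claimed bound once the path decomposition (with its $O(\log n)$ lists per root-to-leaf path and $n\log n$ construction cost, see \Cref{sec:tree-decomposition}) is accounted for in the surrounding theorem. The main obstacle I expect is the bookkeeping that makes the query sweep \emph{top-down} while \textsc{AddPrefix} was \emph{bottom-up}: one has to be careful that the value $d(b)$ fed from a parent to a child is indexed by the same query time $i$ that the child expects, and that the merge at node $b$ correctly carries, for each relevant query, both its running $d$-value and its time-stamp, so that the predecessor-search against $H(\text{child})$ picks out $\Delta_i(\text{child})$ rather than some later state. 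Getting the four sign-cases of the $d$-recurrence to line up with the stored $\Delta(b)[\cdot]$ entries — in particular handling queries whose time falls before the first entry of $H(b)$, where the relevant value is $\Delta_0(b)$ — is the delicate part; everything else is a routine repackaging of parallel merging and prefix sums with the same amortization argument used for \textsc{AddPrefix}.
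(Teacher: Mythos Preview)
Your high-level strategy matches the paper's: run parallel \textsc{AddPrefix} to materialize all intermediate $\Delta$-states, group the queries at each node of $B$ the same way updates are grouped, align each query's time with the correct stored $\Delta$-value by merging two sorted arrays (the paper phrases this as a merge followed by a segmented broadcast, which is your predecessor-via-merge), and reuse the per-level amortization from the \textsc{AddPrefix} analysis. The work and depth accounting you give is the paper's accounting essentially verbatim.

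There is, however, a directional inconsistency in your query phase. You correctly state that the recurrence expresses $d(b)$ ``in terms of $d(\text{child})$ and $\Delta(b)$'' --- this is exactly the paper's recurrence --- but then propose a \emph{top-down} sweep in which ``the value $d(b)$ [is] fed from a parent to a child.'' These two statements are incompatible: since $d(b)$ is computed from $d(l)$ or $d(r)$, the dependency runs child-to-parent, and the sweep must be \emph{bottom-up}, just like \textsc{AddPrefix}. This is precisely what the paper does: each leaf initializes its query array with $d_i(v(i))=0$; each internal node $b$ merges the query arrays arriving from its two children, then merges the result with the stored $\Delta(b)[\cdot]$ array so that every query can read the last $\Delta$-value preceding its time via a segmented broadcast, and finally evaluates $d_i(b)$ from the already-available $d_i(\text{child})$. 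At the root one merges against the stored $\min_i(\rho)$ array in the same way. If you replace ``top-down'' by ``bottom-up'' throughout, your argument becomes the paper's; as written, a top-down pass cannot compute $d(b)$ because the child values it depends on have not yet been produced. (A genuine top-down query \emph{is} possible, but it would need a different invariant --- propagating $\min(b)$ and a running prefix-minimum downward --- not the $d(b)$ quantity you invoke from \Cref{fig:query-case-left} and \Cref{fig:query-case-right}.)
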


The procedure is similar as for the updates. The queries are placed at the leafs and the nodes at the same distance from the root are processed in parallel. 

For a single query ($i$, $v(i)$), the following happens: the query gets processed bottom-up in all nodes on the path $P_i$ from the leaf $v(i)$ to the root, such that every internal node $b$ obtains an intermediate result $d_i(c)$ from its child $c$, updates this result based on $\Delta_i(b)$ to $d_i(b)$, and passes this result to its parent node.
Initially, leaf $v(i)$ sets $d_i{(v(i))} = 0$. Then, every internal node $b$ with left child $l$ and right child $r$ updates this result to
\begin{align*}
d_i(b) = \begin{cases}
d_i(l) & \text{if } \Delta_i(b) > 0,\\
d_i(r) & \text{if } v(i) \in r^{\downarrow} , \text{ and } d_i(r) + \Delta_i(b) < 0 ,\\
d_i(l)-\Delta_i(b) & \text{otherwise,}
\end{cases}
\end{align*}
where $d_i(l)$ and $d_i(r)$ have either been already computed, since the child lies on the path $P_i$, or are equal to zero.

To process all queries in parallel, every leaf now initiates an array that contains all its queries with the corresponding intermediate results sorted by time. Hence, an internal node $b$ obtains such an array $L$ from its left child and $R$ from its right child, which it merges (in parallel, by time) to an array $Q$ that now contains all queries and intermediate results relevant to $b$.

The last issue is to have the $\Delta$ values ready for each relevant query. In particular, we need the value that belongs to the last update before the query.

Thus, we record the time when this value was set by translating the indices of $b$'s relevant updates to the times that correspond to it: $\Delta(b)[i]$ is mapped to the tuple $(H(b)[i], \Delta(b)[i])$. 
Then, we merge $Q$ and the relevant $\Delta$-values in parallel and sorted by time. The new array contains a mix of queries and $\Delta$-values, sorted by time, such that now each query just needs to read the last $\Delta$-value on its left. 
This is achieved by a segmented broadcast, where each $\Delta$-value broadcasts its value to all following queries. A segmented broadcast can be implemented using a variant of the parallel all-prefix-sums algorithm~\cite{Reif:1993:SPA:562546}.

At the root, each query needs to read the overall minimum at the closest preceding time to the query. This can be achieved similarly using parallel merging and a segmented broadcast.

The overall runtime analysis is very similar to the one for the \textsc{AddPrefix}, since processing node $b$ has the same cost: it has work $O(H(b)+1)$ and depth $O(\log(H(b)) + 1)$.

\subsection{Tree Decomposition}\label{sec:tree-decomposition}

To solve \textsc{MinPath} and \textsc{AddPath} for general trees, we find a suitable decomposition of the tree $T$ into paths. This section presents a parallel algorithm to compute the decomposition we also used in the cache-oblivious minimum cut algorithm~\cite{DBLP:conf/ciac/GeissmannG17}. The idea is to repeatedly remove certain subpaths that start at a leaf, as follows: 

We call a path that starts at a leaf and ends at the first vertex that has a sibling on the way up to the root, a \emph{bough} of $T$. A bough ends at a vertex whose parent has multiple children or, if the tree $T$ is a path, at the root. See \Cref{fig:boughs-example} for an example.

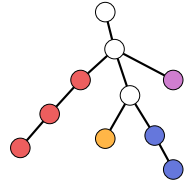
\begin{SCfigure}

		\resizebox{0.37\linewidth}{!} 
		{		
			\vspace{-1em}
			
			\resizebox{0.1\linewidth}{!} 
			{
			\begin{tikzpicture}
			
			\definecolor{mygray}{rgb}{0.8,0.8,0.8}

			\GraphInit[vstyle=Classic]

			\SetUpEdge[lw         = 1pt,
			color      = black,
			labelcolor = white,
			labeltext  = black,
			labelstyle = {sloped}]
			
			\begin{scope}
				\SetVertexNormal[FillColor  = bwsafe_1]
				\Vertex[x=0.4, y=1.2, L={$$}]{w_1}
				\Vertex[x=-0.6, y=0.1, L={$$}]{w_4}
				\Vertex[x=-1.55, y=-1, L={$$}]{w_6}			
			\end{scope}

			\begin{scope}
				\SetVertexNormal[FillColor  = bwsafe_2]
				\Vertex[x=1.2, y=-0.7, L={$$}]{w_7}
			\end{scope}
			
			\begin{scope}
				\SetVertexNormal[FillColor  = bwsafe_3]
				\Vertex[x=2.8, y=-0.6, L={$$}]{w_8}
				\Vertex[x=3.4, y=-1.7, L={$$}]{w_9}
			\end{scope}

			\begin{scope}
				\SetVertexNormal[FillColor  = white, TextColor=white]
				\Vertex[x=1.2, y=3.4, Math]{w_r}	
				\Vertex[x=1.5, y=2.2, Math]{w_0}	
				\Vertex[x=2, y=0.7, Math]{w_5}
			\end{scope}
			
			\begin{scope}
				\SetVertexNormal[FillColor  = bwsafe_4]
				\Vertex[x=3.4, y=1.2, L={$ $}]{w_3}			
			\end{scope}

			\begin{scope}
			\Edges[lw=2pt](w_0,w_3)0
			\Edges[lw=2pt](w_1, w_0)
			\Edges[lw=2pt](w_r, w_0, w_5)
			\Edges[lw=2pt](w_5, w_7)
			\Edges[lw=2pt](w_1,w_4,w_6)
			\Edges[lw=2pt](w_8, w_9)
			\Edges[lw=2pt](w_5, w_8)
			\end{scope}
			\end{tikzpicture}
		}
		}
	\caption{The tree above has $4$ boughs, indicated by the vertex colors. Each bough starts at a leaf and continues upwards until reaching the first node that has a sibling.}
	\label{fig:boughs-example}
	\vspace{-1em}
\end{SCfigure}

The algorithm repeats the following until no edges remain:
\begin{enumerate}
	\item Identify the boughs of $T$. Each bough is in the decomposition.
	\item Remove all vertices that are part of a bough.
\end{enumerate}
Note that shrinking boughs is also used for other parallel graph problems~\cite{DBLP:journals/jal/Ramachandran97}. The algorithm is also related to a parallel tree contraction algorithm~\cite{DBLP:conf/focs/MillerR85}.
We obtain the following bounds: 

\begin{lemma}\label{lem:tree-decomposition}
	A tree $T$ with $n$ vertices can be decomposed into a set of pairwise vertex-disjoint paths $P$ such that: 
	\begin{itemize}[leftmargin=2em]
		\item Each root-to-leaf path in $T$ intersects at most $\log_2{n} $ paths in $P$.
		\item This takes work $O(n \log n)$ and depth $O(\log^2 n)$ (Las Vegas).
	\end{itemize}
\end{lemma}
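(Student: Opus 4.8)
The plan is to analyze the bough-removal algorithm and show the two claimed properties separately: the bound on intersections of root-to-leaf paths, and the work/depth cost. For the structural property, I would first observe that in each round the algorithm removes a set of vertex-disjoint paths (the boughs), and the boughs of the current tree are pairwise vertex-disjoint because two distinct boughs end at distinct vertices and, by definition, a bough contains no vertex that has a sibling except possibly its topmost vertex's child-status is irrelevant. So the output $P$ is indeed a partition of $V$ into vertex-disjoint paths. The key claim is that each root-to-leaf path of the original tree $T$ meets at most $\log_2 n$ of these paths. The natural approach is to show that the number of rounds is at most $\log_2 n$, and that a root-to-leaf path in $T$ enters at most one new bough per round. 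For the latter: fix a root-to-leaf path $R$ in $T$; after the boughs of round $t$ are deleted, the portion of $R$ that survives is a contiguous top segment of $R$ (since deletions only remove subtrees hanging below branching points and leaves), so $R$ contributes exactly one maximal sub-path to the round-$t$ bough collection — namely the bottom part of the surviving segment. Hence $R$ intersects exactly (number of rounds) paths of $P$. For the round count: when we remove all boughs, every leaf disappears and every branching vertex that had all its subtrees consist of single boughs becomes a leaf; a standard argument shows the number of leaves at least halves each round, or equivalently that a tree with $n$ leaves reduces to a tree with at most $n/2$ leaves, because each surviving branching vertex had at least two children each of which was (the top of) a bough, so at least two former "leaf-directions" collapse to one. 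Iterating, after $\log_2 n$ rounds no vertices remain. The main obstacle here is being careful about the exact definition of "bough" at the root and in degenerate cases (a tree that is a single path becomes one bough in one round), and making precise the "number of leaves halves" claim; I would prove it via: if the current tree has $\ell \geq 2$ leaves then after removing all boughs the new tree has at most $\lfloor \ell/2 \rfloor$ leaves, by charging each new leaf to the $\geq 2$ boughs that were attached to it.

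For the complexity bound, I would implement each round as follows. Computing the boughs amounts to: (i) identify all leaves; (ii) for each vertex determine whether it has a sibling, i.e. whether its parent has degree (in the rooted tree) at least $3$, or equivalently at least two children; (iii) then a bough is a maximal path of vertices none of which (except the top) has a sibling, starting at a leaf. Finding these maximal paths is a connectivity/pointer-jumping computation: contract all edges $\{u,\mathrm{parent}(u)\}$ where $u$ has no sibling, and the resulting connected components that contain a leaf are exactly the boughs. This can be done with parallel pointer jumping (or a parallel connected-components subroutine) in $O(\log n_t)$ depth and $O(n_t)$ work on a tree with $n_t$ vertices, and removing the bough-vertices and updating the tree is $O(n_t)$ work and $O(1)$ additional depth. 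Since $n_t$ at least halves each round (the number of vertices, not just leaves, drops: all non-surviving vertices are removed and the surviving set has size at most... — here I would instead sum $\sum_t n_t$), the total work telescopes. Actually the cleanest accounting: the rounds are $T_{\max} \le \log_2 n$ many, each round costs $O(n_t \log n)$ work where $n_t \le n$, but $\sum_t n_t = n$ since every vertex is removed in exactly one round, so total work is $O(n \log n)$; and the depth is $O(\log n)$ per round times $O(\log n)$ rounds, giving $O(\log^2 n)$. The "Las Vegas" qualifier presumably comes from using a randomized parallel connected-components or list-ranking primitive whose runtime bounds hold with high probability; I would cite such a primitive and note its bounds hold w.h.p., which propagates to the $O(\log^2 n)$ depth bound holding w.h.p.

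I expect the main obstacle to be the structural argument that each root-to-leaf path meets at most $\log_2 n$ boughs with the tight constant $\log_2 n$ (not $O(\log n)$), which forces the precise "at least halving" lemma and careful handling of the topmost vertex of each bough and of the root. A secondary subtlety is ensuring the bough-identification step genuinely runs in $O(\log n)$ depth and linear work per round without hidden factors — in particular, maintaining the rooted-tree structure (parent pointers, child counts) across rounds so that step (ii) is cheap. Everything else (vertex-disjointness of $P$, the telescoping of $\sum_t n_t = n$) is routine. I would present the structural lemma first as a self-contained claim, then describe the round implementation, then combine.
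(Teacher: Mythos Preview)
Your approach is essentially the paper's: remove boughs round by round, argue the leaf count at least halves each round (so $\le \log_2 n$ rounds and each root-to-leaf path meets one bough per round), and find the boughs in each round with a linear-work, $O(\log n)$-depth list-contraction primitive (the paper uses random-mate on non-branching vertices, which is where the Las Vegas qualifier comes from). One slip to fix in your accounting: the claim $\sum_t n_t = n$ is false, since a vertex is \emph{present} in every round up to the one in which it is removed, not just in that single round; the bound $O(n\log n)$ still follows, but from the trivial estimate $n_t \le n$ together with $\le \log_2 n$ rounds, which is exactly how the paper argues it.
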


Observe that, in each repetition, the number of leaves is at least halved. This is because each leaf in the new tree had at least two children before the contraction. Hence, there are at most $\log_2 n$ repetitions. This implies that every root-to-leaf path in $T$ is decomposed into at most $\log_2 n$ paths.

We give a randomized algorithm to find the boughs in parallel in \Cref{sec:finding-the-boughs}. With high probability, it has work $O(n)$ and depth $O(\log n)$. Removing the vertices takes work $O(n)$ and depth $O(1)$. As the tree decomposition has $O(\log n)$ iterations, \Cref{lem:tree-decomposition} follows.

\subsubsection{Finding the Boughs}\label{sec:finding-the-boughs}

Next, we show how to find the boughs in parallel, implying the work and depth bounds from \Cref{lem:parallel-boughs}. Observe that the boughs are induced subpaths and thus we can use ideas from parallel list ranking~\cite{DBLP:journals/algorithmica/AndersonM91}. 

We call a vertex in a tree a \textit{branching vertex} if it has at least two children, and similarly we say that a vertex is \emph{non-branching} if it has at most one child.
\begin{enumerate}
	\item As long as there are non-branching nodes in $T$, find an independent set of edges (i.e. edges that do not share endpoints) whose endpoints are both non-branching and contract this set of edges. When merging two vertices, keep track of the original labels of the vertices that were merged into that vertex. Specifically, keep the labels as linked lists with head and tail pointers.
	\item After the procedure converges, the leaf vertices contain the labels of the boughs (as a linked list). 
\end{enumerate}

\begin{lemma}\label{lem:parallel-boughs}
	The boughs of a tree with $n$ vertices can be identified with $O(n)$ work and $O(\log n)$ depth (Las Vegas randomized).
\end{lemma}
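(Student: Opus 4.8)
The plan is to recognise the bough-finding procedure as optimal randomized parallel list ranking~\cite{DBLP:journals/algorithmica/AndersonM91} applied to the subforest induced by the non-branching vertices, together with a separate argument that when the contraction halts the leaves of the contracted tree carry exactly the boughs. For the structural part, let $F$ be the subgraph of $T$ induced by the non-branching vertices. Since each non-branching vertex has at most one child, in $F$ it has at most one child-neighbour and at most one parent-neighbour, and $F$ is a forest, so every component of $F$ is a simple path $v_0 - v_1 - \cdots - v_m$ with $v_{i+1}=\mathrm{parent}(v_i)$. For $i<m$ the vertex $v_{i+1}$ is non-branching and therefore has $v_i$ as its unique child, so $v_i$ has no sibling; and either $v_m$ is the root of $T$ or $\mathrm{parent}(v_m)$ is branching, since otherwise the path would extend further up. Hence, if $v_0$ is a leaf of $T$ then this path is exactly the bough starting at $v_0$, and conversely every bough arises in this way; so a component of $F$ is a bough iff its bottom vertex is a leaf of $T$, every leaf of $T$ lies on exactly one bough, and no branching vertex lies on any bough.

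Next I would argue correctness of the contraction. Contracting an edge between a non-branching vertex $u$ and its parent $v$ produces a vertex whose parent is $\mathrm{parent}(v)$, whose child (if any) is the child of $u$, and whose label list is the concatenation of $v$'s list followed by $u$'s list; this vertex is again non-branching, and the number of children of every branching vertex is unchanged. So contracting an independent set of such edges is well defined, never touches a branching vertex, and once no contractible edge remains each path component of $F$ has been shrunk to a single vertex carrying its vertices in root-to-leaf order. By the characterisation above, such a vertex is a leaf of the contracted tree exactly when its component is a bough, so the leaves of the final tree carry precisely the bough lists. Since this does not depend on the random choices --- which only control how fast contractible edges disappear --- the algorithm is always correct, which gives the Las~Vegas guarantee.

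Finally I would establish the resource bounds. In round $i$ every non-branching vertex incident to a contractible edge draws a fair coin, and the edge $(u,\mathrm{parent}(u))$ is selected exactly when $u$ is tails and $\mathrm{parent}(u)$ is heads; along each path component two adjacent candidate edges place opposite demands on their shared endpoint, so the selected edges form an independent set, and by linearity of expectation a $\tfrac14$-fraction of the current contractible edges is selected. Writing $m_i$ for the number of contractible edges before round $i$, this gives $\mathbb{E}[m_{i+1}\mid m_i]\le\tfrac34 m_i$, which a Chernoff bound over the contractions (independent across components) upgrades to high-probability geometric decay, so $m_i=0$ after $O(\log n)$ rounds and $\sum_i m_i=O(n)$, both with high probability. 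Round $i$ inspects only the $O(m_i)$ active vertices and edges and performs coin flips, constant-size pointer updates, and $O(1)$-time list concatenations, so it has $O(1)$ depth apart from the step that repacks the active elements into a dense array; handling this load-balancing exactly as in optimal randomized list ranking~\cite{DBLP:journals/algorithmica/AndersonM91} keeps the totals at $O(n)$ work and $O(\log n)$ depth.

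The hard part will be this \emph{linear} work bound: a naive implementation rescans all $n$ vertices every round and spends $\Theta(n\log n)$ work, so the active vertices must be load-balanced into a dense array each round without exceeding an overall $O(n)$ work and $O(\log n)$ depth budget --- this is the technical core of optimal randomized list ranking and is what forces the use of randomization here. By comparison, the independence of the selected matching, the expected geometric shrinkage, and the $O(1)$-time linked-list merging are routine.
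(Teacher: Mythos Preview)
Your proposal is correct and follows essentially the same approach as the paper: both apply random-mate contraction of edges between non-branching vertices, invoking the same Anderson--Miller list-ranking reference~\cite{DBLP:journals/algorithmica/AndersonM91} for the $O(n)$ work and $O(\log n)$ depth bounds. Your treatment is more detailed than the paper's terse proof---in particular you explicitly characterise boughs as the leaf-containing components of the non-branching subforest and flag the load-balancing/compaction step needed for linear work---but the underlying argument is the same.
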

\begin{proof}
	We can find large independent sets with $O(N)$ work and $O(1)$ depth (where $N$ is the number of non-branching nodes), for example using the random-mate technique introduced for list ranking~\cite{DBLP:journals/algorithmica/AndersonM91,DBLP:conf/ipps/ArgeGS10}. This ensures that, with high probability, the number of non-branching internal vertices decreases by a constant factor at each repetition and $O(\log n)$ repetitions suffice until all internal vertices are branching. Merging two non-branching vertices takes $O(1)$ work because they have constant degree. As there can be at most $n-1$ merge operations, the overall work from merging is $O(n)$. The depth to contract an independent set of edges connecting non-branching nodes is $O(1)$, as each such edge can be contracted completely parallel to the other edges and needs only $O(1)$ pointers to change.
\end{proof}

It is possible to make the algorithm deterministic by replacing the randomized independent set construction by a deterministic one: Construct a $3$-coloring of the tree and choose the color $c$ with the largest number of non-branching internal vertices. For each internal non-branching vertex of that color, add the edge connecting it to its child to the independent set.

A $3$-coloring of a tree is constructed deterministically in depth $O(\log ^ {*} n)$ and work $O(n \log ^ {*} n)$ \cite{DBLP:journals/jacm/GoldbergT88}. Using this deterministic approach, the work to decompose the tree as in \Cref{lem:tree-decomposition} increases to $O(n \log^2 n \log ^ {*} n)$ and the depth increases to $O(\log^2 n \log ^ {*} n)$.

\subsection{Parallel MinPath and AddPath}\label{sec:minpath-addpath}

Each \textsc{MinPath} and \textsc{AddPath} operation corresponds to $O(\log n)$ \textsc{MinPrefix} and \textsc{AddPrefix} operations, respectively, which can be processed in parallel. For the \textsc{MinPath} operations, the smallest result of the $O(\log n)$ \textsc{MinPrefix} queries can be found sequentially after they have completed. We conclude:

\begin{lemma}
	Performing a batch of $k$ \textsc{MinPath} and \mbox{\textsc{AddPath}} operations on a tree with $n$ nodes takes $O( k \log n (\log n + \log k) + n \log n)$ work and depth $O(\log n (\log k + \log n))$ (Las Vegas randomized).
\end{lemma}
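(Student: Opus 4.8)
The plan is to combine the tree-decomposition lemma (\Cref{lem:tree-decomposition}) with the batched prefix-operation lemma (\Cref{lem:parallel-minimum-path}) in the natural way, being careful about how a single path operation fans out into prefix operations and how the resulting costs aggregate across the paths of the decomposition.

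\textbf{Setup.} First I would invoke \Cref{lem:tree-decomposition} to obtain, with the stated $O(n\log n)$ work and $O(\log^2 n)$ depth, a decomposition of $T$ into vertex-disjoint paths $P_1,\dotsc,P_t$ such that every root-to-leaf path of $T$ meets at most $\log_2 n$ of them. Build the list-based Minimum Prefix structure (the binary tree $B_j$ of \Cref{sec:monotone-minpaths}) on top of each $P_j$; initialization costs $O(|P_j|)$ work and $O(\log |P_j|)$ depth per path, so $O(n)$ work and $O(\log n)$ depth in total. Then, as described in \Cref{sec:minpath-addpath}, rewrite each of the $k$ input operations: an \textsc{AddPath}$(v,x)$ (resp.\ \textsc{MinPath}$(v)$) on $T$ is translated into at most $\log_2 n$ \textsc{AddPrefix} (resp.\ \textsc{MinPrefix}) operations, one on each path of the decomposition that the $v$-to-root path intersects, all carrying the same timestamp $i$. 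This rewriting is a lookup of the (precomputed) list of intersected paths plus, for each, identifying the endpoint vertex in that list; it is $O(\log n)$ work per operation and $O(\log n)$ depth overall (or $O(1)$ with enough pointers), contributing $O(k\log n)$ work.

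\textbf{Batched execution per path and aggregation.} Now distribute the translated prefix operations to their paths; let $k_j$ be the number assigned to $P_j$ and $n_j=|P_j|$. Crucially $\sum_j k_j \le k\log_2 n$ and $\sum_j n_j = n$. Run \Cref{lem:parallel-minimum-path} on each path independently and in parallel: path $P_j$ costs $O\!\bigl(k_j(\log n_j+\log k_j)+n_j\bigr)$ work and $O(\log n_j\log k_j)$ depth. Summing the work over $j$ and bounding $\log n_j\le\log n$, $\log k_j\le\log(k\log n)=O(\log k+\log n)$, the total is $O\!\bigl((k\log n)(\log n+\log k)+n\bigr)=O\!\bigl(k\log n(\log n+\log k)+n\bigr)$; adding the decomposition's $O(n\log n)$ gives the claimed work bound. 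For depth, all paths run in parallel so we take a max, not a sum: $O(\log n_j\log k_j)=O(\log n(\log n+\log k))$; this dominates the $O(\log^2 n)$ decomposition depth and the $O(\log n)$ rewriting/initialization depth. Finally, for each \textsc{MinPath} query we must combine the (at most $\log_2 n$) partial \textsc{MinPrefix} answers into a single minimum; collect the answers belonging to the same original query (they are identified by the common timestamp $i$) and reduce. A sequential scan over the $\le\log n$ values per query is $O(\log n)$ work and $O(\log n)$ depth per query — $O(k\log n)$ work and $O(\log n)$ depth total — and a tree reduction even gives $O(\log\log n)$ depth; either way it is absorbed in the stated bounds. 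Since \Cref{lem:tree-decomposition} is the only randomized ingredient and it is Las Vegas, the whole algorithm is Las Vegas with the stated bounds holding with high probability.

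\textbf{Main obstacle.} The routine steps are fine; the one place to be careful is the bookkeeping of the cost aggregation — specifically making sure that when a path operation fans out into $\le\log n$ prefix operations the work is charged as $\sum_j k_j=O(k\log n)$ (a sum over the decomposition) while the depth is charged as $\max_j$ (since distinct paths' Minimum Prefix structures are disjoint and run concurrently), and that the $\log k_j$ terms can all be uniformly bounded by $O(\log n+\log k)$ so they factor out of the sum cleanly. A second subtlety is ensuring the timestamps are handled consistently: the copies of one original operation on different paths share the index $i$, so after the batched runs the partial results for a query are correctly grouped and the "most recent update before time $i$" semantics that \Cref{lem:parallel-minimum-path} guarantees on each list composes to the correct \textsc{MinPath} semantics on $T$. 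Once these accounting points are pinned down, the lemma follows by direct substitution into the earlier bounds.
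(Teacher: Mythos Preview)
Your proposal is correct and follows exactly the approach the paper takes: decompose the tree via \Cref{lem:tree-decomposition}, translate each of the $k$ operations into $O(\log n)$ prefix operations, run \Cref{lem:parallel-minimum-path} on each path in parallel, and sequentially aggregate the $O(\log n)$ partial answers for each \textsc{MinPath}. The paper's own proof is essentially a one-sentence sketch of this reduction, and your accounting (summing work via $\sum_j k_j \le k\log n$, bounding $\log k_j = O(\log k + \log n)$, taking the maximum for depth) correctly fills in the details it omits.
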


\section{Parallel Minimum Cuts}\label{sec:parallel-mincuts}

The parallel \emph{Minimum Path} structure is the missing puzzle-piece in a parallelization of Karger's algorithm. This solved, it remains to show how to create the batch of Minimum Path operations and how to combine the results of the Minimum Path structure into a minimum cut, achieving the following overall bounds.

\begin{theorem}\label{thm:mincut-in-parallel}
	The minimum cut of a graph can be computed with depth $O(\log ^ 3 n)$ and work $O(m \log ^ 4 n)$ (Monte Carlo randomized).
\end{theorem}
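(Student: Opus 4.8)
The plan is to parallelize Karger's algorithm~\cite{karger2000minimum} by replacing its sequential \emph{Minimum Path} data structure with the batched primitive developed above. First I would invoke \Cref{lem:sampling} to build, in $O(\log^3 n)$ depth and $O((m + n\log n)\log^3 n)$ work, a set $S$ of $O(\log n)$ spanning trees such that, with high probability, some $T^\star \in S$ has a minimum cut of $G$ that cuts at most two edges of $T^\star$. It then suffices to solve, for every $T \in S$, the constrained problem of \Cref{fig:constrained-cut} --- find the cheapest cut of $G$ that $2$-respects $T$ (cuts at most two edges of $T$) --- and to output the cheapest cut found over all $T \in S$. Correctness follows from part (b) of \Cref{lem:sampling} together with the observation that for the good tree $T^\star$ the constrained optimum equals the global minimum cut. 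Since \Cref{lem:sampling} errs only with probability $n^{-c}$ and every other subroutine is exact (the Las Vegas routines can be truncated at their high-probability depth bound), the overall algorithm is Monte Carlo.

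The heart of the matter is reducing the $2$-respecting computation for a single tree $T$ to a batch of \textsc{MinPath}/\textsc{AddPath} operations on $T$. Root $T$ and identify each tree edge with its lower endpoint $v$; cutting that edge separates $v^{\downarrow}$ from $V \setminus v^{\downarrow}$. The $1$-respecting part is easy: the value $\mathrm{down}(v)$ of that cut is the total weight of graph edges with exactly one endpoint in $v^{\downarrow}$, and all the $\mathrm{down}(v)$ are produced by $O(m)$ \textsc{AddPath} operations --- route each graph edge $\{x,y\}$ as two root-paths meeting at their LCA and add $\pm w(\{x,y\})$ --- followed by $n$ reads. For the $2$-respecting part I would follow Karger: with the ``lower'' tree edge $v_2$ fixed, assign to every candidate ``upper'' edge $v_1$ the weight $\mathrm{down}(v_1) - 2\,C(v_1, v_2)$, where $C(v_1,v_2)$ is the total weight of graph edges running between the two sides of the resulting cut; then $\mathrm{down}(v_2) + \min_{v_1}\bigl(\mathrm{down}(v_1) - 2\,C(v_1,v_2)\bigr)$ is the best cut whose lower edge is $v_2$. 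As $v_2$ ranges over the vertices in the order induced by the bough/path decomposition of \Cref{lem:tree-decomposition}, each incident graph edge changes the weights $C(\cdot, v_2)$ along a single root-path (an \textsc{AddPath}), and the minimization over the relevant $v_1$ is a \textsc{MinPath} query; the ancestor/descendant subcase is handled by an analogous, nested sweep with a second weight function. Processing one bough-level emits $O(m)$ Minimum Path operations, and there are $O(\log n)$ levels, so one tree costs $O(m\log n)$ operations and all of $S$ costs $O(m\log^2 n)$.

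With the batch specified, I would feed each bough-level's operations (over all trees in $S$ at once) into the parallel \textsc{MinPath}/\textsc{AddPath} construction, which internally uses \Cref{lem:parallel-minimum-path}, \Cref{lem:tree-decomposition}, and \Cref{lem:parallel-boughs}. Since a level has $k = O(m\log n)$ operations on trees of at most $n$ vertices, and $m \le n^2$ gives $\log k = O(\log n)$, one level costs $O\bigl(k\log n(\log n + \log k) + n\log n\bigr) = O(m\log^3 n)$ work and $O(\log^2 n)$ depth. The $O(\log n)$ levels run one after another (each needs the contracted tree from the previous), so the $2$-respecting phase costs $O(m\log^4 n)$ work and $O(\log^3 n)$ depth; for every $T$, taking the minimum over its query answers and then the minimum over $S$ is a parallel reduction adding $O(\log n)$ depth and $O(m)$ work. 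Precomputing the operation sequence --- Euler-tour numbering, LCA routing, and the per-edge operation indices and ``times'' --- is done with parallel sorting and LCA/Euler-tour algorithms in $O(\log n)$ depth and $O(m\log n)$ work, which is lower order. Adding the $O(\log^3 n)$ depth and $O((m + n\log n)\log^3 n)$ work of \Cref{lem:sampling}, both within the stated budget, yields the claimed $O(m\log^4 n)$ work and $O(\log^3 n)$ depth.

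The main obstacle is the second step: turning Karger's \emph{adaptive} sweep over choices of the lower tree edge into a \emph{static}, predetermined batch of \textsc{MinPath}/\textsc{AddPath} operations whose fixed execution order faithfully simulates the sweep --- every \textsc{MinPath} query must see exactly the weights it would have seen sequentially, which means getting the relative order of the edge-induced \textsc{AddPath} operations and the queries exactly right. This is precisely the regime the batched data structure was designed for, but it requires care with the Euler-tour ordering, with routing edges through their LCAs, and --- the most delicate point --- with forcing the ancestor/descendant subcase into the same template, since it wants a nested sweep and a different weight assignment. Once the batch is correctly specified, the work and depth bounds follow mechanically from the lemmas above.
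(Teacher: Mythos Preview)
Your proposal is correct and follows essentially the same route as the paper: invoke \Cref{lem:sampling}, then for each of the $O(\log n)$ trees reduce the $2$-respecting minimum cut to a batch of \textsc{MinPath}/\textsc{AddPath} operations generated via the bough decomposition and executed with the parallel primitive, exactly as in \Cref{sec:cutting-two-edges}--\Cref{sec:extracting-the-cut}. The one organizational difference is that the paper first generates \emph{all} contracted graphs $G_1,\dots,G_k$ (in $O(\log^2 n)$ depth, since only the contractions are sequential) and then runs their $O(\log n)$ batches fully in parallel, giving $O(\log^2 n)$ depth for the $2$-respecting phase (\Cref{lem:final}) rather than the $O(\log^3 n)$ you get by interleaving contraction and batch execution level by level; this does not affect the theorem, since \Cref{lem:sampling} already contributes $O(\log^3 n)$ depth.
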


In the following, we consider a (rooted) spanning tree $T$ of the input graph $G$ and we want to compute the smallest cut that cuts at most two edges of $T$. We will do so with work $O(m \log ^ 3 n)$ and depth $O(\log ^2 n)$. Together with \Cref{lem:sampling} this implies the main result.

Karger already showed a parallel algorithm that computes the smallest cut that cuts \emph{exactly one} edge of a given spanning tree. In fact, the algorithm computes for each vertex $v$, the value of the cut $v^{\downarrow}$ that has the descendants of $v$ on one side of the cut (and therefore cuts only the edge from $v$ to its parent in $T$).

\begin{lemma}[Karger~\cite{karger2000minimum}]\label{lem:one-edge-cut}
	A smallest cut that cuts exactly one edge of a given spanning tree can be computed with work $O(m)$ and depth $O(\log m)$.
\end{lemma}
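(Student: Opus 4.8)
The plan is to reduce the problem to evaluating, for every non-root vertex $v$, the value $\mathrm{cut}(v^{\downarrow})$ of the cut $v^{\downarrow}$, and then returning the smallest such value together with the vertex attaining it. This is correct: if a cut $C$ cuts exactly one edge $\{p,c\}$ of $T$ (with $c$ the child of $p$), then each of the two components of $T$ minus $\{p,c\}$ is a connected subgraph of $T$ none of whose edges is cut, hence lies entirely on one side of $C$; since those components are $c^{\downarrow}$ and $V\setminus c^{\downarrow}$, we get $C\in\{c^{\downarrow},\,V\setminus c^{\downarrow}\}$. So the cuts cutting exactly one tree edge are precisely $\{\,v^{\downarrow}:v\neq\mathrm{root}\,\}$, and taking the minimum of the $n-1$ values (and remembering its argument) costs $O(n)$ work and $O(\log n)$ depth.

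For the cut values I would use the classical weighted-degree/LCA identity. Root $T$; an edge $e=\{a,b\}$ of $G$ with weight $x$ crosses $v^{\downarrow}$ iff exactly one of $a,b$ is a descendant of $v$, which happens iff $v$ lies strictly between one endpoint and $\mathrm{LCA}(a,b)$ on the tree. Letting $\deg_w(z)=\sum_{e\ni z}w(e)$ and
\begin{align*}
\mathrm{label}(z)\;=\;\deg_w(z)\;-\;2\!\!\sum_{e:\ \mathrm{LCA}(e)=z}\!\!w(e)\,,
\end{align*}
one obtains $\mathrm{cut}(v^{\downarrow})=\sum_{z\in v^{\downarrow}}\mathrm{label}(z)$; equivalently, each edge contributes $+x$ at $a$, $+x$ at $b$, and $-2x$ at $\mathrm{LCA}(a,b)$, and $\mathrm{cut}(v^{\downarrow})$ is the subtree sum of these contributions. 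Concretely the algorithm is: (1) build a work-efficient parallel lowest-common-ancestor structure on $T$ and answer the $m$ LCA queries; (2) compute $\deg_w(z)$ for all $z$ by summing the adjacency lists in parallel; (3) aggregate the $-2x$ LCA-contributions per vertex, forming $\mathrm{label}(z)$; (4) compute an Euler tour of $T$ and, via parallel list ranking, the entry/exit interval $[\mathrm{in}(v),\mathrm{out}(v))$ of each vertex; (5) place $\mathrm{label}(z)$ at position $\mathrm{in}(z)$, take all-prefix-sums of that array, and read off $\mathrm{cut}(v^{\downarrow})$ for every $v$ as a difference of two prefix sums. Note we deliberately avoid invoking the (heavier) Minimum Path structure here, since this direct Euler-tour computation already suffices.

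Each ingredient has a work-efficient, low-depth parallel realization: parallel LCA preprocessing with $O(n)$ work and $O(\log n)$ depth supporting $O(1)$ queries; Euler tour plus list ranking in $O(n)$ work and $O(\log n)$ depth; all-prefix-sums in $O(n)$ work and $O(\log n)$ depth; reading off the $n$ interval sums in $O(n)$ work and $O(1)$ depth (using concurrent reads); and summing the adjacency lists in $O(m)$ work and $O(\log n)$ depth. Since $G$ is connected we have $n\le m+1$, so $O(\log n)=O(\log m)$ and everything fits within $O(m)$ work and $O(\log m)$ depth.

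The main obstacle is step (3): aggregating the $m$ per-edge LCA contributions into per-vertex sums using only $O(m)$ work and $O(\log m)$ depth on a CREW PRAM that forbids concurrent writes. A naive comparison sort by LCA-vertex costs $\Theta(m\log m)$ work, which is too much; the right tool is a work-efficient parallel integer sort / weighted histogram on keys drawn from $[n]$ (with $n\le m$), or, alternatively, folding this aggregation into a single bottom-up tree-contraction pass that simultaneously carries out the subtree sums. Either route yields $O(m)$ work and $O(\log m)$ depth and completes the proof.
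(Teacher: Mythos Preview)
The paper does not actually prove this lemma; it is quoted as a black box from Karger's paper and only used (the paper merely remarks that the algorithm ``computes for each vertex $v$ the value of the cut $v^{\downarrow}$''). Your sketch reconstructs exactly the argument behind that citation: the identity $\mathrm{cut}(v^{\downarrow})=\sum_{z\in v^{\downarrow}}\bigl(\deg_w(z)-2\sum_{e:\mathrm{LCA}(e)=z}w(e)\bigr)$, followed by Euler tour, parallel list ranking, Schieber--Vishkin LCA, and prefix sums. So methodologically you are aligned with the intended proof, just supplying the details the paper omits.

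The one place that deserves a sharper sentence is your step~(3). You are right that grouping the $m$ LCA contributions by vertex is the only nontrivial step on CREW, and that comparison sort is too expensive. Your two proposed fixes are both viable, but note that the tree-contraction alternative really needs the contributions to already sit at the vertices (tree contraction aggregates $n$ per-vertex values, not $m$ loose items), so it does not by itself bypass the bucketing; you would still need an $O(m)$-work grouping first. The clean route is therefore the first one: a randomized integer sort or semisort on keys in $[n]$ with $n\le m$, which gives $O(m)$ expected work and $O(\log m)$ depth and is consistent with the randomized setting the paper works in anyway. With that clarification your argument is complete and matches the cited result.
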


We therefore focus on giving a parallel algorithm for the case where \emph{exactly two} edges of the spanning tree are cut.

\subsection{Cutting two edges of a spanning tree}\label{sec:cutting-two-edges}

Our parallel algorithm uses ideas from Karger's sequential algorithm~\cite{DBLP:journals/siamrev/Karger01} to reduce the problem to a set of Minimum Path operations, which we already showed how to perform in parallel in \Cref{sec:parallel-minpath}. 

We are given a spanning tree $T$ of $G$. Assume that the smallest cut $C$ that cuts at most two edges of $G$ cuts the edges $(u, v)$ and $(s, t)$ of $T$ (where $u$ is the parent of $v$ and $s$ the parent of $t$). Let us focus on the case where $u$ is not an ancestor of $s$ and vice versa. See \Cref{fig:two-edges-cuts} for an example. The case where $u$ is an ancestor of $s$ is similar (see \Cref{sec:comparable-vertices}).

If we take the value of the cut $t^{\downarrow}$ and add the value of the cut $v^{\downarrow}$ we incorrectly count (twice) the edges that go between the descendants of $v$ and the descendants of $t$. We will use the Minimum Path structure to keep track of these "extra" edges that go between the two parts of the tree.

In the following, we will explain how the algorithm handles possibilities for these four vertices, including for instance if one of $v$ or $t$ was a leaf. Since the algorithm does not know which two edges to cut is best, all possibilities need to be considered. 

\pagebreak

\subsubsection{Handling a leaf}
Assume that $v$ is a leaf of $T$ (see \Cref{fig:two-edges-cuts} for an illustration). The algorithm proceeds as follows:

\begin{enumerate}
	\item 
Initialize a Minimum Path Structure where the initial weight of each vertex $x$ is given by the value of the cut $x^{\downarrow}$. Recall that these values are computed by the algorithm from \Cref{lem:one-edge-cut}.
	\item Then, for each edge $e=(v, y)$ incident to $v$, perform \\ \textsc{AddPath}($v$, $-2w(e)$).
\end{enumerate} 

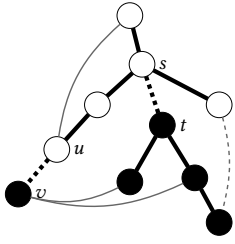
\begin{SCfigure} 
	\centering
	\resizebox{0.38\linewidth}{!} {
		
		\begin{tikzpicture}
		
		\GraphInit[vstyle=Classic]
		
		\SetUpEdge[lw         = 1pt,
		color      = black,
		labelcolor = white,
		labeltext  = black,
		labelstyle = {sloped}]
		
		\begin{scope}
		\SetVertexNormal[FillColor  = black]
		\Vertex[x=-1.55, y=-1,L={\huge$v$}]{w_6}	
		\Vertex[x=1.2, y=-0.7,L={$$}]{w_7}	
		\Vertex[x=2.8, y=-0.6,L={$$}]{w_8}
		\Vertex[x=3.4, y=-1.7,L={$$}]{w_9}
		\Vertex[x=2, y=0.7, L={\huge$t$}]{w_5}
		\end{scope}
		
		\begin{scope}
		\SetVertexNormal[FillColor  = white]
		\Vertex[x=0.4, y=1.2, L={$$}]{w_1}
		\Vertex[x=-0.6, y=0.1, L={\huge$u$}]{w_4}
		\Vertex[x=1.2, y=3.4, L={$$}]{w_r}	
		\Vertex[x=1.5, y=2.2, L={\huge$s$}]{w_0}	
		\Vertex[x=3.4, y=1.2,L={$$}]{w_3}	
		\end{scope}
		
		\begin{scope}[]
		\Edges[lw=3pt](w_0,w_3)
		\Edges[lw=3pt](w_1, w_0)
		\Edges[lw=3pt](w_r, w_0)
		\Edges[lw=3pt](w_5, w_7)
		\Edges[lw=3pt](w_1, w_4)
		\Edges[lw=3pt](w_8, w_9)
		\Edges[lw=3pt](w_5, w_8)
		\end{scope}
		
		\begin{scope}[style=dashed]
		\Edges[lw=3pt](w_0, w_5)
		\Edges[lw=3pt](w_4,w_6)
		\end{scope}

		\begin{scope}
		\tikzstyle{EdgeStyle}=[color=black!60 ,bend left=20]
		\Edges[](w_8, w_6)
		\Edges[](w_7, w_6)
		\Edges[](w_4, w_r)
		\end{scope}
		
		\begin{scope}
		\tikzstyle{EdgeStyle}=[color=black!60 ,bend left=15, style=dashed]
		
		\Edges[](w_3, w_9)
		\end{scope}

		\end{tikzpicture}
	}
	\caption{The weight of the (black, white) cut that cuts $(u, v)$ and $(s, t)$ is given by the dashed edges. This equals the edges with at least one black endpoint minus the edges with two black endpoints. }
	\label{fig:two-edges-cuts}
	
\end{SCfigure}

Observe that after these steps, the weight in the minimum path structure of a vertex $x$ that is not an ancestor $v$ is exactly the value of the cut $x^{\downarrow}$ minus twice the weight of the edges between $v$ and the descendants of $x$.
Moreover, since we assume that $C$ is a smallest cut of $G$ among those that cut at most two edges of $T$, at least one of $v$'s neighbors has to be a descendant of $t$. Otherwise, the cut $t^{\downarrow}$ would be smaller than the cut $C$, which leads to a contradiction.
This implies that the value of $C$ is given by the value of the cut $v^{\downarrow}$ plus the minimum weight of a node $x$, such that $x$ is a neighbor but not an ancestor of $v$.
Thus, we  find the value of the cut $C$ as follows:
\begin{enumerate}
	\item Add $\infty$ to the weight of all ancestors of $v$ by \textsc{AddPath}($v$, $\infty$).
	\item For each neighbor $x$ of $v$, call \textsc{MinPath}($x$) and keep the smallest result.
	\item Add the value of the cut $v^{\downarrow}$. This is the value of the cut $C$.
\end{enumerate}

\subsubsection{Handling a bough} 
The observations from before can be generalized to also handle boughs. Recall from \Cref{sec:finding-the-boughs} that a bough is a maximal induced subpath that contains a leaf. 
Similar to the case where $v$ is a leaf, we use the Minimum Path Structure to keep track of the "extra" edges that go between $v^{\downarrow}$ and $t^{\downarrow}$. 
The following procedure handles the case where $v$ is in a bough.

\begin{enumerate}
	\item Initialize the minimum path structure just as in the leaf case.
	\item Start at the leaf of the bough and walk up the bough. At every node $y$ in the bough:
	\begin{enumerate}
	\item If $y$ is a leaf, perform \textsc{AddPath}($y$, $\infty$) .
	\item Moreover, for every edge $e=(y, x)$ incident to $y$, perform \textsc{AddPath}($x$, $-2w(e)$).
	\item Afterwards, for every neighbor $x$ of $y$, perform the query MinPath($x$). Record the smallest result.
	\end{enumerate}
\end{enumerate}

Consider the state of the minimum path structure when the above procedure  has processed some node $y$ in the bough. Consider another node $x$ that is not an ancestor of any node in the bough. Then, the weight of $x$ is equal to the value of the cut $x^{\downarrow}$ minus twice the weight of the "extra" edges that exist between $y^\downarrow$ and $x^\downarrow$.

Moreover, by the minimum assumption of $C$, it must hold that $v$ has a descendant that is a neighbor of a descendant of $t$.

Putting these two observations together, we conclude that it is indeed enough to perform a \textsc{MinPath}($y$) query for every neighbor $y$ of a node in the bough and record the smallest result. When we have processed $v$, the smallest result seen so far plus the value of the cut $v^{\downarrow}$ gives the value of the cut $C$.

\subsubsection{Handling a general tree} 
To handle a general subtree we repeat the procedure for every bough. Then, we contract all edges (of the spanning tree and the overall graph) with at least one endpoint in a bough. Afterwards, we recurse in the new tree. We call such a phase a {\em bough-phase}. Note that this gives the same decomposition of the tree as in \Cref{sec:tree-decomposition}.

In each bough-phase, the boughs can be handled in any order. However, after handling a bough, we need to restore the weights to their initial state. This is done by reversing the order and the sign of all \textsc{AddPath} operations: We visit the nodes in the bough top-down and replace each AddPath($x$, $w$) by AddPath($x$, $-w$).
In the end, we return the smallest cut value found.

\subsection{Generating the batch of Minimum Path operations}\label{sec:generating-the-batch}

We show how to generate the batch of minimum path operations for one bough-phase. We already saw how to identify the boughs in \Cref{lem:parallel-boughs}. The remaining difficulty is to compute (in parallel) the order in which the edges are accessed. 

Observe that each edge is accessed at most four times: for each of its endpoints once on the way going upwards in the bough containing this endpoint, and once on the way down. We get the order and the operations as follows.

\begin{enumerate}
	\item Order each bough by list-ranking and give each leaf a unique identifier. Then, the order in which a vertex is visited is derived from the number of the leaf of its bough and its position in the list-ranking. See \Cref{fig:bough-vertex-order} for an example. 
	\item Each leaf creates an \textsc{AddPath}($v$, $\infty$) at the time of its first visit and an \textsc{AddPath}($v$, $-\infty$) at the time of its second visit.
	\item When a node $y$ in a bough is visited at times $t_1$ and $t_2$, then every neighbor $x$ of $y$ creates an update that corresponds to \textsc{AddPath}($x$, $-2w(e)$) at time $t_1$ and a query that corresponds to \text{MinPath}($x$) at time $t_1$. Moreover, it creates an update that corresponds to \textsc{AddPath}($x$, $2w(e)$) at time $t_2$ (this undoes the operation of the former update). Each edge in a graph is thus accessed at most four times, namely every time one of its endpoints is visited. See \Cref{fig:bough-edge-order} for an illustration.
\item The queries and updates are sorted according to the visit times, where additionally operations with the same visit time are ordered such that updates come before queries. This gives the operation sequence that handles all the boughs in the tree. 
\end{enumerate}

\begin{figure}[t]
	\begin{minipage}[t]{.48\linewidth}
		\centering	\resizebox{0.85\linewidth}{!} {
			
			\begin{tikzpicture}

			\definecolor{mygray}{rgb}{0.8,0.8,0.8}
			
			\GraphInit[vstyle=Classic]

			\SetUpEdge[lw         = 1pt,
			color      = black,
			labelcolor = white,
			labeltext  = black,
			labelstyle = {sloped}]
			
			\begin{scope}
			\SetVertexNormal[FillColor  = bwsafe_1]
			\Vertex[x=0.4, y=1.2, L={\huge$3,4$}]{w_1}
			\Vertex[x=-0.6, y=0.1, L={\huge$2,5$}]{w_4}
			\Vertex[x=-1.55, y=-1, L={\huge$1,6$}]{w_6}			
			\end{scope}
			
			\begin{scope}
			\SetVertexNormal[FillColor  = bwsafe_2]
			\Vertex[x=1.2, y=-0.7, L={\huge$7,8$}]{w_7}
			\end{scope}
			
			\begin{scope}
			\SetVertexNormal[FillColor  = bwsafe_3]
			\Vertex[x=2.8, y=-0.6, L={\huge$10,11$}]{w_8}
			\Vertex[x=3.4, y=-1.7, L={\huge$9,12$}]{w_9}
			\end{scope}

			\begin{scope}
			\SetVertexNormal[FillColor  = white, TextColor=white]
			\Vertex[x=1.2, y=3.4, Math]{w_r}	
			\Vertex[x=1.5, y=2.2, Math]{w_0}	
			\Vertex[x=2, y=0.7, Math]{w_5}
			\end{scope}
			
			\begin{scope}
			\SetVertexNormal[FillColor  = bwsafe_4]
			\Vertex[x=3.4, y=1.2, L={\huge$13,14$}]{w_3}			
			\end{scope}
			
			\begin{scope}
			\Edges[lw=2pt](w_0,w_3)
			\Edges[lw=2pt](w_1, w_0)
			\Edges[lw=2pt](w_r, w_0, w_5)
			\Edges[lw=2pt](w_5, w_7)
			\Edges[lw=2pt](w_1,w_4,w_6)
			\Edges[lw=2pt](w_8, w_9)
			\Edges[lw=2pt](w_5, w_8)
			\end{scope}
			
			\end{tikzpicture}
		}
		\caption{The boughs of the tree are indicated by colors. The node labels indicate a possible order in which they are visited by the algorithm. Each node in a bough is visited twice. Once on a bottom-up traversal of its bough and then on a top-down traversal of its bough. The order between boughs is arbitrary.}
		\label{fig:bough-vertex-order}
	\end{minipage}
	\hfill
	\begin{minipage}[t]{.48\linewidth}
		\centering
		\resizebox{0.85\linewidth}{!} {
			
			\begin{tikzpicture}
			
			\GraphInit[vstyle=Hasse]
			
			\SetUpEdge[lw         = 1pt,
			color      = black,
			labelcolor = white,
			labeltext  = black,
			labelstyle = {sloped}]
			
			\begin{scope}
			\SetVertexNormal[FillColor  = bwsafe_1]
			\Vertex[x=0.4, y=1.2,Math]{w_1}
			\Vertex[x=-0.6, y=0.1,Math]{w_4}
			\Vertex[x=-1.55, y=-1,Math]{w_6}			
			\end{scope}
			
			\begin{scope}
			\SetVertexNormal[FillColor  = bwsafe_2]
			\Vertex[x=1.2, y=-0.7,Math]{w_7}
			\end{scope}
			
			\begin{scope}
			\SetVertexNormal[FillColor  = bwsafe_3]
			\Vertex[x=2.8, y=-0.6,Math]{w_8}
			\Vertex[x=3.4, y=-1.7,Math]{w_9}
			\end{scope}

			\begin{scope}
			\SetVertexNormal[FillColor  = white, TextColor=white]
			\Vertex[x=1.2, y=3.4, Math]{w_r}	
			\Vertex[x=1.5, y=2.2, Math]{w_0}	
			\Vertex[x=2, y=0.7, Math]{w_5}
			\end{scope}
			
			\begin{scope}
			\SetVertexNormal[FillColor  = bwsafe_4]
			\Vertex[x=3.4, y=1.2,Math]{w_3}			
			\end{scope}

			\begin{scope}
			\Edges[lw=2pt](w_0,w_3)
			\Edges[lw=2pt](w_1, w_0)
			\Edges[lw=2pt](w_r, w_0, w_5)
			\Edges[lw=2pt](w_5, w_7)
			\Edges[lw=2pt](w_1,w_4,w_6)
			\Edges[lw=2pt](w_8, w_9)
			\Edges[lw=2pt](w_5, w_8)
			\end{scope}
			
			\begin{scope}
			\tikzstyle{EdgeStyle}=[color=black!60 ,bend left=15]
			\tikzset{LabelStyle/.style = {above=0.15cm}}
			\Edges[label= {\huge$1,6,7,8$},labelstyle={right}](w_7, w_6)
			\Edges[label= {\huge$2,5$},labelstyle={right}](w_4, w_r)
			\Edges[label= {\huge$9,12,13,14$},labelstyle={right}](w_3, w_9)
			\end{scope}

			\end{tikzpicture}
		}
		\caption{The order in which the edges are visited is given by the order of the vertices. An edge is visited whenever one of its endpoints is visited. Thus, each edge is visited two or four times. The figure indicates the times when the non-tree edges (grey) are visited, based on the example on the left.}
		\label{fig:bough-edge-order}
	\end{minipage}

\end{figure}

\begin{lemma}\label{lem:parallel-batch}
	Generating the batch of Minimum Path operations from \Cref{sec:cutting-two-edges} that handles all the boughs of a tree takes work $O(m\log n)$ and depth $O(\log n)$.
	\end{lemma}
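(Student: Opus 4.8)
The plan is to analyze the four-step procedure from \Cref{sec:generating-the-batch} step by step, bounding the work and depth of each and then taking the maximum depth and the sum of the work (since the steps run in sequence). The key observation that makes everything work is the one already noted in the text: each edge of $G$ is accessed at most four times in a bough-phase — at most twice per endpoint (once walking up the bough, once walking down) — so the total number of queries and updates generated is $O(m)$, and the total number of tree-vertex visits is $O(n) = O(m)$.

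First I would handle Step 1. Ordering each bough by list-ranking takes $O(n)$ work and $O(\log n)$ depth using standard parallel list-ranking~\cite{DBLP:journals/algorithmica/AndersonM91}; assigning unique leaf identifiers is a prefix-sum over the leaves, again $O(n)$ work and $O(\log n)$ depth. Composing the leaf number with the list-rank position to get a global visit time is an $O(1)$-depth, $O(n)$-work local computation. Steps 2 and 3 are the main accounting: every leaf emits $O(1)$ operations, and for every tree-vertex visit $(y, t_1, t_2)$ and every incident graph edge $e = (y,x)$ we emit a constant number of tuples (an \textsc{AddPath} and a \textsc{MinPath} at $t_1$, an undo \textsc{AddPath} at $t_2$). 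Summing $\deg_G(y)$ over all vertices $y$ in all boughs of this phase gives $O(m)$ tuples; each tuple is produced independently in $O(1)$ depth, so Step 3 is $O(m)$ work and $O(1)$ depth once the visit times are known. The factor of $\log n$ in the work bound comes in because, accumulated over all bough-phases of the tree decomposition, an edge may be charged in $O(\log n)$ phases — but per phase the cost is $O(m)$; I would phrase the lemma as the per-phase cost and note that \Cref{lem:tree-decomposition} gives $O(\log n)$ phases. (Reading the statement literally, the $O(m \log n)$ already folds in the $\log n$ phases, or equivalently charges each of the $O(m)$ operations an $O(\log n)$ sorting cost; I would make this explicit.)

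Step 4 is the sort: we have $O(m)$ tuples (or $O(m \log n)$ accumulated over all phases) to sort by visit time, breaking ties so that updates precede queries — this is a single parallel comparison sort, costing $O(m \log m)$ work and $O(\log m)$ depth, i.e. $O(m \log n)$ work and $O(\log n)$ depth since $m \le n^2$. Taking the maximum over all steps gives depth $O(\log n)$ and total work $O(m \log n)$, as claimed.

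The main obstacle, and the point I would be most careful about, is the \emph{addressing}: to emit the tuple for edge $e=(y,x)$ at vertex $y$, the process handling $y$ must know the visit time(s) of $y$, which were computed in Step 1 from $y$'s bough's leaf-identifier and list-rank. This requires propagating the leaf identifier along each bough — which is exactly what list-ranking delivers — and then a scatter/gather so that each edge endpoint can read off its visit times. I would argue this is a constant number of parallel sorting/semisorting rounds on $O(m)$ items (group edge-occurrences by the endpoint vertex, join against the per-vertex visit-time table), each $O(m \log n)$ work and $O(\log n)$ depth, which does not change the bounds. Everything else is routine prefix-sums and sorting; no step exceeds $O(\log n)$ depth, and the dominant work term is the sort in Step 4.
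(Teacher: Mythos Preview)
Your proposal is correct and follows essentially the same approach as the paper's proof: list-rank the boughs to get visit times ($O(n)$ work, $O(\log n)$ depth), emit $O(1)$ operations per edge-endpoint in parallel ($O(m)$ work, $O(1)$ depth), then sort the resulting $O(m)$ operations ($O(m\log n)$ work, $O(\log n)$ depth). One clarification: the lemma is stated for a \emph{single} bough-phase, and the $\log n$ in the work bound comes entirely from the sort in Step~4, not from accumulating over phases --- your hedging paragraph about ``folding in the $\log n$ phases'' is unnecessary and slightly misleading, since the per-phase batch already has size $\Theta(m)$ and sorting it already costs $\Theta(m\log n)$.
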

	\begin{proof}
		List ranking of the boughs takes $O(\log n)$ depth and $O(m)$ work~\cite{DBLP:journals/algorithmica/AndersonM91}. The vertices derive their visit time with $O(\log n)$ depth and $O(n)$ work, the edges can then be processed completely in parallel with $O(1)$ depth and $O(m)$ work. Finally, sorting the batch of $O(m)$ operations takes $O(\log n)$ depth and $O(m \log n)$ work~\cite{DBLP:journals/siamcomp/Cole88}.
	\end{proof}

\subsection{Extracting the Minimum Cut.} \label{sec:extracting-the-cut}
To finally find the smallest cut of $G$ that cuts at most two edges of a given spanning tree $T$, we need to repeatedly apply such batches of operations in every bough-phase:

\begin{enumerate}
	\item 	For all vertices $x$, compute the value of the cut $x^{\downarrow}$ of $G$ that has the descendants of $x$ in $T$ on one side of the cut (using the algorithm from \Cref{lem:one-edge-cut}). In particular, this yields the smallest cut that cuts exactly one edge of $T$ . 
	\item Find the boughs of $T$. Contract all edges incident to a node in a bough (contract the edges in $T$ and $G$ at the same time) and recurse until the graph has a single vertex. This generates a sequence of graphs $G_1, G_2, \dotsc, G_k$ and corresponding spanning trees $T_1, T_2, \dotsc, T_k$. Note that $G_1=G$ and $T_1=T$.
	\item In parallel, for each of the pairs $T_i$ and $G_i$ in the sequence, run the algorithm from \Cref{sec:generating-the-batch} to generate the necessary Minimum Path operations.
	\item 		For each of the trees $T_i$, execute the batch of Minimum Path operations in parallel (\Cref{sec:minpath-addpath}). Find the smallest value $\text{MinPath}(v)$ + $C(v^{\downarrow})$ over all vertices $v$ (and all queries to $v$), where $C(v^{\downarrow})$ is the value of the cut of $G$ that has the descendants of $v$ in $T_i$ on one side of the value. These values have already been computed when finding the smallest cut that cuts one edge of $T$.
	\item The smallest value $\text{MinPath}(v)$ + $C(v^{\downarrow})$ found overall is the value of the smallest cut that cuts exactly two edges of $T$.	
\end{enumerate}

The algorithm obtains the following bounds:

\begin{lemma}\label{lem:final}
	Finding the smallest cut of a graph $G$ that cuts at most two edges of a given spanning tree $T$ of $G$ takes work $O(m  \log^3 n)$ and depth $O(\log ^2  n)$ (Las Vegas randomized).
\end{lemma}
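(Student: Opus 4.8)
The plan is to run the five-step procedure of \Cref{sec:extracting-the-cut} and to bound its cost by combining the lemmas already established, so that the only genuinely new content is the complexity accounting. For correctness I would first observe that, since $T$ is a spanning tree, every cut of $G$ crosses at least one edge of $T$; hence ``at most two tree edges cut'' splits into ``exactly one'' and ``exactly two''. The exactly-one case is handled by \Cref{lem:one-edge-cut}, which in passing produces all the initial weights $C(x^{\downarrow})$ that the Minimum Path structure is seeded with. For the exactly-two case I would invoke the analysis of \Cref{sec:cutting-two-edges}: for an optimal pair of cut tree edges $(u,v)$, $(s,t)$ with $u,s$ incomparable in $T$, the cut value equals $C(v^{\downarrow})$ plus the smallest weight generated by the $\textsc{AddPath}/\textsc{MinPath}$ sequence attached to the bough containing $v$, and the comparable case is symmetric (\Cref{sec:comparable-vertices}), with the same cost. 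Since the batch generator of \Cref{sec:generating-the-batch} emits exactly those operation sequences for every bough of every contracted graph $G_i$, taking the global minimum of $\textsc{MinPath}(v)+C(v^{\downarrow})$ over all queries of all phases, and comparing it with the exactly-one value, yields the answer.

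For the resource bounds I would add up the five steps. \Cref{lem:one-edge-cut} contributes $O(m)$ work and $O(\log m)$ depth. Computing the bough decomposition of $T$ together with the induced contractions of $G$ produces a sequence $G_1,\dots,G_k$, $T_1,\dots,T_k$ with $k=O(\log n)$ phases (each phase at least halves the number of leaves, as in \Cref{lem:tree-decomposition}); the tree side costs $O(n\log n)$ work and $O(\log^2 n)$ depth, and contracting the corresponding graph edges in a phase --- merging every bough together with its parent into one supernode (a connected-components step) and dropping self-loops --- costs $O(m\log n)$ work and $O(\log n)$ depth, hence $O(m\log^2 n)$ work and $O(\log^2 n)$ depth in total. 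Generating the Minimum Path batch for one phase costs $O(m\log n)$ work and $O(\log n)$ depth by \Cref{lem:parallel-batch}, and the $k$ phases are independent once the graph sequence is fixed, so this is $O(m\log^2 n)$ work and $O(\log n)$ depth overall.

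The dominant term is executing the batches. The batch for phase $i$ has $O(|E(G_i)|)=O(m)$ operations (each edge of $G_i$ is touched at most four times) on a tree with at most $n$ nodes, so by the tree-level Minimum Path bound of \Cref{sec:minpath-addpath} it takes $O(m\log n(\log n+\log m)+n\log n)=O(m\log^2 n)$ work and $O(\log n(\log m+\log n))=O(\log^2 n)$ depth; here I would use $m\le n^2$ and assume $G$ connected (otherwise the minimum cut is $0$ and is detected up front). Summing the work over the $k=O(\log n)$ phases gives $O(m\log^3 n)$, while running all $k$ batch executions in parallel keeps the depth at $O(\log^2 n)$. The final reduction --- minimizing $\textsc{MinPath}(v)+C(v^{\downarrow})$ over the $O(m\log n)$ recorded values --- adds $O(m\log n)$ work and $O(\log n)$ depth. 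Summing all contributions gives $O(m\log^3 n)$ work and $O(\log^2 n)$ depth; the only randomized ingredients are the Las Vegas bough routines (\Cref{lem:parallel-boughs}, \Cref{lem:tree-decomposition}), so the whole algorithm is Las Vegas.

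The hard part will be the cross-phase accounting rather than any single estimate: I need to check that every contracted graph $G_i$ still has $O(m)$ edges and every $T_i$ has $O(n)$ vertices, so that each batch execution can be charged $O(m\log^2 n)$ work uniformly, and --- more delicately --- that the $O(\log n)$ phases can be scheduled so that all batch generations and executions overlap in time, giving depth $O(\log^2 n)$ rather than $O(\log^3 n)$. A secondary concern is verifying that processing several boughs within one phase, then restoring the weights by replaying the $\textsc{AddPath}$ operations top-down with negated increments, is correctly serialized into the single operation sequence fed to the parallel Minimum Path structure, and that the $\infty$-weights placed on ancestors do not leak between boughs; but these are precisely the points already settled in \Cref{sec:cutting-two-edges} and \Cref{sec:generating-the-batch}, so little remains beyond citing them.
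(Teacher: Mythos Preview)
Your proposal is correct and follows essentially the same route as the paper's own proof: decompose into $k=O(\log n)$ bough-phases, observe that the contracted graphs $G_i$ each have $O(m)$ edges, bound each batch execution by $O(m\log^2 n)$ work and $O(\log^2 n)$ depth via the parallel Minimum Path lemma, and run the $k$ batch executions in parallel after the (inherently sequential) $O(\log^2 n)$-depth generation of the sequence $G_1,\dots,G_k$. Your write-up is in fact a bit more careful than the paper's in separating the sequential phase-generation from the parallel batch execution and in noting the $\log m=O(\log n)$ assumption; the paper simply cites \Cref{lem:parallel-minimum-path} and asserts the same totals.
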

\begin{proof}	
	 Using \Cref{lem:parallel-boughs}, generating the graphs $G_1, G_2, \dotsc, G_k$ and corresponding spanning trees $T_1, T_2, \dotsc, T_k$ takes $O(m \log n)$ work and $O(\log^2 n)$ depth. Since the number of leaves is at least halved in each bough-phase, $k=O(\log n)$. Contracting the edges in the bough is just a matter of mapping each vertex in a bough to the vertex that is the parent of the topmost vertex in the bough, look up the value of the endpoints of each vertex in the mapping and remove loops. It is not necessary to combine parallel edges.	
	
	For each graph in $G_1, G_2, \dotsc, G_k$, generating the queries takes $O(m \log n)$ work and $O(\log n)$ depth (by \Cref{lem:parallel-batch}). Executing the batch of $O(m)$ minimum path operations and finding the minimum result takes $O(m \log ^2 n)$ work and $O(\log^2 n)$ depth (by \Cref{lem:parallel-minimum-path}).
	As this can be done in parallel for each of the $O(\log n)$ graphs, the algorithm takes $O(m \log^3 n)$ work and $O(\log^2 n)$ depth overall.
		
	The algorithm can be easily adapted to also output the edges of $T$ that define the cut, essentially by recording the edges that generated a Minimum Path query.
	\end{proof}
	
	Together with \Cref{lem:sampling}, \Cref{lem:final} implies our main result as stated in \Cref{thm:mincut-in-parallel}.

\section{Cache-oblivious algorithm}\label{sec:CO}

The parallel minimum cut algorithm gives improved cache miss bounds to compute a minimum cut. The key difference to our previous cache-oblivious algorithm~\cite{DBLP:conf/ciac/GeissmannG17} is the implementation of the minimum path structure. The parallel minimum path algorithm from \Cref{sec:parallel-minpath} uses operations (such as merging sorted lists and computing prefix sums) that are easily made cache-efficient. 

The cache-oblivious model~\cite{Frigo99cache-obliviousalgorithms,Brodal} considers a fully-associative cache with optimal replacement strategy of size $M$ with cache lines of width $B$. The parameters $B$ and $M$ of the machine cannot be used in the algorithm description, hence the name cache-\emph{oblivious}.

We obtain the following bounds to compute a minimum cut if we replace the minimum path structure in our previous cache-oblivious algorithm~\cite{DBLP:conf/ciac/GeissmannG17} with the the data structure from \Cref{sec:parallel-minpath}.

\begin{theorem}
	Finding a minimum cut incurs $O\left(\frac{m \log^ 4 n}{ B} + 1\right)$ cache misses and takes $O(m \log ^ 4 n)$ time (Monte Carlo randomized).
\end{theorem}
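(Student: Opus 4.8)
The plan is to execute the parallel minimum cut algorithm of \Cref{sec:parallel-minpath} and \Cref{sec:parallel-mincuts} as a \emph{sequential} computation, implementing every data-parallel step by its cache-oblivious counterpart, and to bound the resulting cache misses; the computation time is then automatic, since running a work-$W$ parallel algorithm sequentially takes $O(W)$ time, and $W = O(m\log^4 n)$ by \Cref{thm:mincut-in-parallel} (the result is Monte Carlo for the same reason \Cref{thm:mincut-in-parallel} is). The point to exploit is that the algorithm is assembled from primitives that all have scan-like locality: scanning an array of size $N$ costs $O(N/B + 1)$ cache misses; merging two sorted arrays of total size $N$ is a coordinated scan and also costs $O(N/B + 1)$; the all-prefix-sums and segmented-broadcast primitives used throughout \Cref{sec:parallel-addprefix} and \Cref{sec:parallel-minprefix} are a constant number of scans, hence $O(N/B + 1)$; and cache-oblivious sorting of $N$ elements costs $\operatorname{sort}(N) = O\bigl((N/B)\log_{M/B}(N/B) + 1\bigr)$.

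I would then bound the cache misses in two layers, paralleling \Cref{lem:final}. \textbf{Auxiliary steps.} The tree packing of \Cref{lem:sampling} (sampling a sparse subgraph, then $O(\log^2 n)$ minimum spanning tree computations), the one-edge-cut values $C(v^{\downarrow})$ of \Cref{lem:one-edge-cut} (an Euler-tour / tree-contraction computation), the tree decomposition of \Cref{lem:tree-decomposition}, and the batch generation of \Cref{lem:parallel-batch} are all realised by a polylogarithmic number of sorts and scans, exactly as in the previous cache-oblivious algorithm~\cite{DBLP:conf/ciac/GeissmannG17}; since $\log_{M/B}(m/B) = O(\log n)$, their combined cost is $O((m/B)\log^3 n)$ cache misses. \textbf{Minimum Path.} The dominant layer is the constrained two-edge cut of \Cref{sec:cutting-two-edges}: for each of the $O(\log n)$ spanning trees returned by \Cref{lem:sampling} it runs $O(\log n)$ bough-phases, each producing a batch of $O(m)$ Minimum Path operations on a tree with $O(n)$ vertices. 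Such a batch decomposes into $O(m\log n)$ Minimum Prefix operations distributed over the lists of the tree decomposition, whose lengths sum to $O(n)$. In the batched construction of \Cref{sec:parallel-addprefix} and \Cref{sec:parallel-minprefix}, building the structure for one list is a bottom-up sequence of merges, prefix sums, and segmented broadcasts on arrays of total size $O(k_{\mathrm{list}}\log n + \ell_{\mathrm{list}})$ (and similarly for the MinPrefix part); laid out contiguously (see below), the whole structure for one list costs $O\bigl((k_{\mathrm{list}}\log n + \ell_{\mathrm{list}})/B + 1\bigr)$ cache misses. Summing over the lists of one batch gives $O\bigl((m\log^2 n + n)/B + 1\bigr)$ --- the single cache-oblivious sort of its $O(m)$ operations adds only $\operatorname{sort}(m) = O((m/B)\log n)$ --- and multiplying by the $O(\log^2 n)$ (spanning tree, bough-phase) pairs yields $O((m/B)\log^4 n + \log^2 n)$ cache misses, which dominates the auxiliary layer. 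Finally, whenever $B > m\log^2 n$ the entire working set (input plus the $O(m\log^2 n)$-size structure of the current batch) fits in $O(1)$ cache lines, so the whole computation incurs $O(1)$ cache misses; combined with the previous bound this gives $O(m\log^4 n / B + 1)$.

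The main obstacle is making the bottom-up sweeps of the Minimum Path structure genuinely scan-like. Naively, producing the arrays $H(b), X(b), \Phi(b), \Delta(b)$ at an inner node $b$ from those of its two children is a merge, and a binary tree on a list of length $\ell$ has $\Theta(\ell)$ nodes, so a careless implementation would pay $\Omega(1)$ cache misses per node and per list, producing an unwanted additive $\Omega(n\log^2 n)$ term that the claimed bound does not allow (for $m = \Theta(n)$ and $\log^2 n < B < n\log^2 n$). This is exactly where the \emph{monotonicity} highlighted in \Cref{sec:monotone-minpaths} pays off a second time: if the arrays of each level of each binary tree --- and, across the decomposition, of the whole tree --- are stored consecutively in the left-to-right order in which they are visited, then computing one level from the level below touches memory in a single monotone pass, so the entire sweep for a batch is $O(1)$ scans of a region of size $O(m\log^2 n + n)$ rather than one pass per node. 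A secondary bookkeeping point is to verify that the $\log_{M/B}$ factor from cache-oblivious sorting never multiplies $\log^4 n$: sorting is invoked only $O(\log^2 n)$ times in total, each time on $O(m)$ elements, so it contributes the lower-order $O((m/B)\log^3 n)$; and the residual additive $O(\log^2 n)$ from the $O(\log^2 n)$ per-batch ``$+1$'' terms is absorbed by the large-$B$ case analysis above.
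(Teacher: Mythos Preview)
Your proposal is correct and follows the same approach the paper takes: plug the new batched Minimum Path structure of \Cref{sec:parallel-minpath}, which is built entirely from merges, prefix sums, and segmented broadcasts, into the cache-oblivious framework of~\cite{DBLP:conf/ciac/GeissmannG17}. The paper offers only the one-paragraph sketch preceding the theorem, so what you have written is in fact considerably more detailed than the paper's own argument; in particular, your treatment of the per-node ``$+1$'' overhead via a level-by-level contiguous layout, and your accounting of the residual additive $O(\log^2 n)$ against the large-$B$ case, are details the paper leaves implicit.
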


\section{Conclusion}\label{sec:conclusion}

Compared to the best sequential algorithm, our algorithm performs only $O(\log n)$ more work, namely $O(m \log ^ 4 n)$ work and $O(\log ^3 n)$ depth.
It remains an open problem to find a work-optimal minimum cut algorithm that has poly-logarithmic depth.

The $\Omega(\log ^ 3 n)$ depth of our algorithm comes from the algorithm to find the suitable spanning trees (where the minimum cut crosses at most two edges of one of the spanning trees). The rest of our algorithm has depth $O(\log ^2 n)$. Consequently, a lower depth algorithm to find a suitable spanning tree would yield a lower depth minimum cut algorithm.

\clearpage
\appendix

\section{When the minimum cut is the difference of two descendant sets}\label{sec:comparable-vertices}

We discussed the situation when the minimum cut is given by the union of the descendant sets of two vertices in in \Cref{sec:cutting-two-edges}. However, it is also possible that the minimum cut equals the \emph{difference} of the descendants of two vertices. See \Cref{fig:two-edges-cuts-comparable} for an illustration of how such a cut is structured. The reduction onto Minimum Path operations follows Karger \cite{karger2000minimum}, we include it for completeness. The bounds match those obtained for \Cref{lem:final}.

The algorithm walks along the boughs of the tree in the same way as in \Cref{sec:cutting-two-edges}.
It uses the minimum path structure to keep track of the edges with an endpoint in $v^{\downarrow}$ and one endpoint in $t^{\downarrow}$ for all $t$ whilst iterating over $v$:

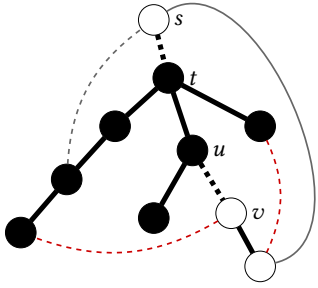
\begin{figure} 
	\centering
	\resizebox{0.6\linewidth}{!} {
		
		\begin{tikzpicture}
		
		\GraphInit[vstyle=classic]
		
		\SetUpEdge[lw         = 1pt,
		color      = black,
		labelcolor = white,
		labeltext  = black,
		labelstyle = {sloped}]
		
		\begin{scope}
		\SetVertexNormal[FillColor  = black]
		\Vertex[x=-1.55, y=-1,L={$$}]{w_6}	
		\Vertex[x=1.2, y=-0.7,L={$$}]{w_7}	
\Vertex[x=0.4, y=1.2, L={$$}]{w_1}
		\Vertex[x=2, y=0.7, L={\huge$u$}]{w_5}
		\Vertex[x=-0.6, y=0.1, L={$$}]{w_4}
		\Vertex[x=1.5, y=2.2, L={\huge$t$}]{w_0}
		\Vertex[x=3.4, y=1.2,L={$$}]{w_3}
		\end{scope}
		
		\begin{scope}
		\SetVertexNormal[FillColor  = white]

		\Vertex[x=1.2, y=3.4, L={\huge$s$}]{w_r}	
		\Vertex[x=2.8, y=-0.6,L={\huge$v$}]{w_8}
		\Vertex[x=3.4, y=-1.7,L={$$}]{w_9}
		\end{scope}
		
		\begin{scope}[]
		\Edges[lw=3pt](w_0,w_3)
		\Edges[lw=3pt](w_1, w_0)
		
		\Edges[lw=3pt](w_5, w_7)
		\Edges[lw=3pt](w_1, w_4)
		
		\Edges[lw=3pt](w_4,w_6)
		\Edges[lw=3pt](w_0, w_5)
		\Edges[lw=3pt](w_8, w_9)
		\end{scope}
		
		\begin{scope}[style=dashed]
		\Edges[lw=3pt](w_r, w_0)
		\Edges[lw=3pt](w_5, w_8)
		
		\end{scope}
		
		\begin{scope}
		\tikzstyle{EdgeStyle}=[color=red!80!black]
		
		\end{scope}
		
		\begin{scope}
		\tikzstyle{EdgeStyle}=[color=black!60 ,bend left=100]
		\Edges[](w_r, w_9)
		\end{scope}
		
		\begin{scope}
		\tikzstyle{EdgeStyle}=[color=black!60 ,bend left=25, style=dashed]
		\Edges[](w_4, w_r)
		\end{scope}
		
		\begin{scope}
		\tikzstyle{EdgeStyle}=[color=red!80!black ,bend left=25, style=dashed]
		\Edges[](w_8, w_6)
		\Edges[](w_3, w_9)
		\end{scope}

		\end{tikzpicture}
	}
	\caption{The weight of the (black, white) cut that cuts $(u, v)$ and $(s, t)$ is given by the dashed edges.  The idea to compute the value of this cut is to start with the value of the cut $t^{\downarrow}$ minus the value of the cut $v^{\downarrow}$ and think about which edges have been incorrectly counted. Those are the red edges, which have one endpoint in $v^{\downarrow}$ and one endpoint in $t^{\downarrow}-v^{\downarrow}$: They have been incorrectly subtracted and we need to add the weight of those edges back (twice). This weight is related to the total weight of the edges with one endpoint in $v^{\downarrow}$ and one endpoint in $t^{\downarrow}$, from which we still need to subtract the weight of the edges with both endpoints in $v^{\downarrow}$. }
	\label{fig:two-edges-cuts-comparable}
\end{figure}


Initialize the minimum path structure such that the weight of each vertex $v$ is the weight of the cut $v^{\downarrow}$. Traverse the vertexes in the same order as in \Cref{sec:cutting-two-edges}.
\begin{enumerate}
	\item At vertex $v$, for each neighbor $u$, call \textsc{AddPath}($u$, $2w(u, v)$).
	\item Then, call \textsc{MinPath}($v$).
\end{enumerate}
When walking back down the bough, undo the \textsc{AddPath} operations by reversing the sign. After all boughs are processed, contract the boughs and proceed recursively.

It remains to compute for each vertex $v$ the weight of the edges with both endpoints descending from $v$. This value needs to be subtracted from the result of the \textsc{MinPath}($v$) query.

Compute for each edge $(u, v)$ in the graph $G$ the lowest common ancestor \cite{DBLP:journals/siamcomp/SchieberV88} of $u$ and $v$ in the spanning tree $T$. Sum up for each vertex $v$ the total weight of the edges whose lowest common ancestor is $v$. Then, sum up these values to obtain for each vertex the total weight of the edges with both endpoints in $v^{\downarrow}$. Denote this value by $\rho^{\downarrow}(v)$. This final summation can be done by decomposing into paths as in \Cref{sec:tree-decomposition} and performing parallel all-prefix-sums.  

Finally, add to each \textsc{MinPath}($v$) result the precomputed value $4\rho^{\downarrow}(u)$ and the weight of the cut $v^{\downarrow}$. The smallest such result is the weight of the desired cut.




\bibliographystyle{ACM-Reference-Format}
\balance
\bibliography{parallel-mincuts}

\end{document}